\documentclass[preprint,journal]{vgtc}

\newtheorem{theorem}{Theorem}[section]
\newenvironment{proof}{\paragraph{\textit{Proof}.}}{\hfill$\square$}

\ifpdf%
  \pdfoutput=1\relax                   %
  \pdfcompresslevel=9                  %
  \pdfoptionpdfminorversion=7          %
  \ExecuteOptions{pdftex}
  \usepackage{graphicx}                %
  \DeclareGraphicsExtensions{.pdf,.png,.jpg,.jpeg} %
\else%
  \ExecuteOptions{dvips}
  \usepackage{graphicx}                %
  \DeclareGraphicsExtensions{.eps}     %
\fi%

\graphicspath{{figures/}{pictures/}{images/}{./}} %

\vgtcinsertpkg

\usepackage{microtype}                 %
\PassOptionsToPackage{warn}{textcomp}  %
\usepackage{textcomp}                  %
\usepackage{mathptmx}                  %
\usepackage{times}                     %
\usepackage{cite}                      %
\usepackage{tabu}                      %
\usepackage{booktabs}                  %
\usepackage{paralist} 
\usepackage{amsmath}
\usepackage{amssymb}
\usepackage[T1]{fontenc}
\usepackage{xurl}
\usepackage{svg}
\usepackage{subcaption}
\usepackage{multirow}
\usepackage{todonotes}

\newcommand{\change}[1]{\textcolor{black}{#1}}

\onlineid{0}

\vgtccategory{Research}
\vgtcpapertype{algorithm/technique}

\title{MosaicSets: Embedding Set Systems into Grid Graphs}

\author{Peter Rottmann, Markus Wallinger, Annika Bonerath, Sven Gedicke, Martin Nöllenburg, and Jan-Henrik Haunert}

\authorfooter{
\item
 Peter Rottmann, Annika Bonerath, Sven Gedicke, and Jan-Henrik Haunert are with the Geoinformation Group of the University of Bonn. E-mail: \{surname\}@igg.uni-bonn.de.
\item
 Markus Wallinger and Martin Nöllenburg are with the Algorithms and Complexity Group of the Technical University of Vienna. E-mail: \{mwallinger,noellenburg\}@ac.tuwien.ac.at.
}

\shortauthortitle{Rottmann \MakeLowercase{\textit{et al.}}: MosaicSets: Embedding Set Systems into Grid Graphs}
\abstract{
Visualizing sets of elements and their relations is an important research area in information visualization. In this paper, we present \emph{MosaicSets}: a novel approach to create Euler-like  diagrams from non-spatial set systems such that each element occupies one cell of a regular hexagonal or square grid. The main challenge is to find an assignment of the elements to the grid cells such that each set constitutes a contiguous region. As use case, we consider the research groups of a university faculty as elements, and the departments and joint research projects as sets. We aim at finding a suitable mapping between the research groups and the grid cells such that the department structure forms a base map layout. Our objectives are to optimize both the compactness of the entirety of all cells and of each set by itself. We show that computing the mapping is NP-hard. However, using integer linear programming we can solve real-world instances optimally within a few seconds. Moreover, we propose a relaxation of the contiguity requirement to visualize otherwise non-embeddable set systems. 
We present and discuss different rendering styles for the set overlays. Based on a case study with real-world data, our evaluation comprises quantitative measures as well as expert interviews. 
}

\keywords{Set Visualization, Euler Diagram, Integer Linear Programming, Hypergraph}

\CCScatlist{ %
 \CCScat{K.6.1}{Management of Computing and Information Systems}%
{Project and People Management}{Life Cycle};
 \CCScat{K.7.m}{The Computing Profession}{Miscellaneous}{Ethics}
}

\teaser{
  \centering
  $\vcenter{\hbox{\includegraphics[scale=0.27]{}}}$\quad
  $\vcenter{\hbox{\includegraphics[scale=0.27]{}}}$\quad
  $\vcenter{\hbox{\includegraphics[scale=0.08]{}}}$
  \caption{Visualizing the research groups of the Agricultural Faculty of the University of Bonn with \emph{MosaicSets} using the model variant MSE optimizing eccentricity-based compactness of sets (see Sect.~\ref{ssec:setup}).}
  \label{fig:teaser} 
}

\nocopyrightspace

\newsavebox\mybox

\begin{document}

\firstsection{Introduction}
\label{sec:intro}

\maketitle
Set visualization is an important branch of information visualization that generally deals with the visualization of set systems, i.e., the relationships between multiple sets (e.g., set intersections, inclusions, exclusions) and of individual elements with their set memberships.
As a concrete use case and our running example, we consider the scenario that the board of an institution (e.g., a university faculty) aims to improve its strategic planning processes by supporting the discussions at meetings with informative visualizations. 
In particular, the institution board wishes to visualize the institution's partition into organizational units (e.g., departments and research groups) as well as important intra-institutional collaborations (e.g., larger research projects involving multiple research groups of different departments). 
In this scenario, the research groups are the elements of the set system and the departments as well as the intra-institutional collaborations are the sets. 
In particular, each research group belongs to exactly one department and possibly one or more intra-institutional collaborations. 
In consequence, the departments form a disjoint set cover of all research groups. 
In MosaicSets, we will exploit this property of having a partition of all elements into a disjoint set cover, which is a property that many real-world set systems have.

For visualizing abstract set systems, Euler and Venn diagrams are commonly used and several methods for computing them have been proposed.
However, they usually focus on the set relationships and, even in area-proportional diagrams, either do not show individual elements at all or they do not play a primary role in the diagrams.
In our use case, though, elements such as the research groups are of primary importance for the envisioned tasks and thus it is desirable that each element is prominently represented with an area of the same size and shape, guaranteeing an equitable visual representation of each research group.
Since the existing Euler-like methods do not satisfy this requirement, we present a new approach that embeds a given set system into a prescribed grid, which for example may consist of square-shaped or hexagonal tiles, such that each element occupies exactly one grid tile (or grid cell).
Our primary goal is to ensure that for each of the given sets the corresponding grid cells constitute a contiguous region as in an Euler diagram.
Coloring the cells according to the partition into a disjoint set cover, e.g., by organizational unit of the assigned element, we obtain a visualization similar to a grid map \cite{DBLP:journals/ijcga/EppsteinKSS15} with uniform cells and schematic region boundaries; see Fig.~\ref{fig:teaser}.
This serves as the base map of our set system. 
Sets representing intra-institutional collaborations can then be visualized as contiguous overlays on top of the base map, for example, using colored contours or visual links as Kelp diagrams~\cite{dinkla2012kelp,meulemans2013kelpfusion}.
If the number of such collaborations is small, they all may be visualized at the same time. 
However, even if at any time only one or few selected collaborations are shown, we consider it beneficial to compute a single base map embedding once under consideration of all collaborations, which can then be shown on demand in an interactive setting or using a small-multiples approach.
That way the grid map of organizational units provides a stable context for visualizations of different collaborations.

Grid maps %
have been introduced primarily for proper geovisualization tasks. 
In contrast, we here use the map as a metaphor to visualize non-geographical data, namely an abstract set system, in an intuitive way.
While methods for the computation of map-like visualizations of graphs \cite{DBLP:conf/gd/GansnerHK09} and set systems \cite{ehkp-mvecg-15} have been presented before, using grid maps for set visualization is new.
Consequently, with this paper we introduce \emph{MosaicSets} as a new set visualization technique.
We aim to demonstrate the potential of MosaicSets for the described use case but also to open up a new topic of research for the network and set visualization community.
As concrete contributions, we present 
\begin{compactitem}
\item our design decisions and a corresponding formal mathematical model for optimizing the MosaicSets visualization as a constrained hypergraph embedding problem, alongside with a proof of its NP-completeness (Sect.~\ref{sec:formalization}),
\item an exact method for solving the hypergraph embedding problem using integer linear programming including variations to optimize  compactness or relax contiguity requirements if needed (Sect.~\ref{sec:ilp}), 
\item an implementation providing a choice of square or hexagonal grids for the base map and two overlay rendering styles for highlighting the additional sets (Sect.~\ref{sec:rendering}), and
\item a qualitative evaluation based on an expert interview for the real use case, which has inspired this work: the strategy development of the Agricultural Faculty of the University of Bonn; this is complemented by quantitative computational experiments, which explore the performance and resulting quality of MosaicSets (Sect.~\ref{sec:evaluation}). 
\end{compactitem}
We invite the reader to explore our two interactive visualizations under \url{https://www.geoinfo.uni-bonn.de/mosaicsets} \change{and our code at \url{https://gitlab.igg.uni-bonn.de/geoinfo/mosaicsets}}.

\section{Related Work}
Several topics in the literature relate to concepts relevant for MosaicSets; we discuss general set visualization (Sect.~\ref{sec:relwork-setvis}), geovisualization (Sect.~\ref{sec:relwork-geovis}), map metaphors for abstract data visualization (Sect.~\ref{sec:relwork-graphRepresentation}), districting problems (Sect.~\ref{sec:relwork-districting}), and hypergraph drawings via support graphs (Sect.~\ref{sec:relwork-supportgraphs}).

\subsection{Set Visualization}\label{sec:relwork-setvis}
We follow the classification of set visualization techniques by Alsallakh et al.~\cite{DBLP:journals/cgf/AlsallakhMAHMR16} and discuss MosaicSets with respect to a selection of these.

Two standard techniques for set visualization are Euler and Venn diagrams. Both represent each set by a closed curve and its enclosed region. In Euler diagrams, every overlap of a set of regions represents a non-empty intersection of the corresponding sets. Labels for elements can be placed in the overlap areas, however, this is not necessary to represent the set relations.
In contrast, Venn diagrams show an overlap area for every possible combination of set intersections; here labels for elements are necessary to indicate which set intersections are non-empty.
Euler diagrams are an intuitive way to displaying elements, sets, and set relations, but are mostly limited to a few sets due to clutter and drawability issues~\cite{DBLP:journals/cgf/AlsallakhMAHMR16}. 
Many Euler diagram techniques~\cite{StapletonRHZ11,RodgersZF08,MicallefR14,KehlbeckGWD22} focus on showing set relations and hence do not display individual elements. 
There are also examples of Euler diagram techniques that explicitly visualize also the elements as small glyphs or by distributing the element names inside the \change{corresponding} regions~\cite{micallef2012assessing,SimonettoAA09,brath2012multi,riche2010untangling}.
MosaicSets is a special form of an Euler diagram. It differs from existing methods by visualizing every element as a uniformly-shaped and equally-sized grid cell. In contrast to other approaches, our use case leads to a more refined problem setting. In particular, while our method for the embedding step can deal with arbitrary set systems, our method for the subsequent rendering step expects that a part of the sets (e.g., the departments) forms a partition of the elements. Their visualization provides a background for the other sets (e.g., the projects).

Matrix-based techniques are set visualizations that have the advantage of being clutter-free~\cite{DBLP:journals/cgf/AlsallakhMAHMR16}. An example is ConSet~\cite{kim2007visualizing}, where the sets and elements are associated to the rows and columns of a matrix; matrix entries encode whether an element is contained in a set or not.
In UpSet~\cite{lex2014upset}, the matrix columns represent the sets, and each row corresponds to a set intersection. The cells encode whether the corresponding set is part of the corresponding intersection. 
In OnSet~\cite{sadana2014onset}, each cell of the matrix corresponds to an element. The matrix is copied for each set such that only the cells of elements in the set are colored. %
Frequency grids~\cite{micallef2012assessing} aim at communicating set sizes. Each entry of the matrix corresponds to an element, but set systems are usually small with few overlaps. %
Often the elements are placed set by set in horizontal~\cite{BROWN201192} or vertical order~\cite{price2007communicating}. Also a random order~\cite{brase2009pictorial} has been considered, where the contiguity of the sets has been sacrificed. As far as we know, there exists no work on the optimization of the elements' placement for frequency grids. In general, MosaicSets can be considered as a special form of a matrix-based technique, especially if a square grid is used. As OnSet and frequency grids, each element corresponds to a cell in the matrix. The novelty of our approach is the optimization of the mapping between elements and matrix cells such that sets form contiguous and compact regions also for non-trivial set systems.

A third class of set visualization techniques are overlays, where the placement of the elements is already given as input, e.g., from their spatial attributes. %
Overlay techniques have a limited scalability with respect to the number of elements and sets~\cite{DBLP:journals/cgf/AlsallakhMAHMR16}. 
An example of an overlay technique showing sets as regions is MapSets~\cite{ehkp-mvecg-15}, which partitions the underlying map into polygonal regions, so that each region contains only elements of one set. 
A generalization of MapSets are ClusterSets~\cite{geiger2021clustersets} that allow a set to be depicted by more than one polygonal region. 
In Bubble Sets~\cite{collins2009bubble} a smooth bubble-like region is computed for each set. The regions of different sets can overlap. 
Alternatively, overlay techniques can indicate sets by linear spanning structures. 
One example is LineSets~\cite{alper2011design}, which computes B\'ezier curves passing through the elements of each set using a traveling salesperson heuristic. 
A similar approach are Kelp diagrams~\cite{dinkla2012kelp} that visualize a set with a spanning graph structure. 
MosaicSets also uses an overlay component when visualizing the sets that do not belong to the base map (e.g., in our use case the research projects), yet the base map is computed such that these overlay sets have connected representations. %
We use existing techniques for displaying such overlay sets, e.g., by drawing their outer boundaries or using Kelp-like visualizations.

\subsection{Techniques from Geographic Information Visualization}\label{sec:relwork-geovis}
Although in our problem setting the data have no spatial component, several geovisualization techniques are visually and computationally similar. 
An approach that is visually similar to MosaicMaps are spatial treemaps, which partition a set of geometric elements according to a hierarchical tree structure.
Generally this does not lead to regular grid representations. An example for such spatial treemaps is an approach by Jern et al.~\cite{jern2009treemaps} that combines treemaps and choropleth maps. Efforts have been undertaken to tweak treemaps to produce regular grid visualizations, e.g., spatially ordered treemaps~\cite{wood2008spatially}, OD-maps~\cite{wood2010visualisation}, and spatial matrices~\cite{DBLP:journals/cartographica/WoodSD11}. Our approach differs in two aspects: (i) our elements do not need to be mapped according to a spatial location, (ii) our problem comprises contiguity requirements for the sets.

Visually similar to MosaicSets are also mosaic cartograms~\cite{DBLP:journals/cgf/CanoBCPSS15}. 
Here, we are given a political map with an integer weight per region. 
The aim is to  map each region to a set of grid cells in a square or hexagonal grid whose number is proportional to that region's weight.
In contrast to MosaicSets, mosaic cartograms shall represent the spatial input in terms of adjacencies, shapes and relative positions well and they only show a partition of the grid, but no overlapping sets.

Grid maps are an approach that is similar to MosaicSets both visually and in the problem setting. A grid map (also known as tile map~\cite{mcneill2017generating} or equal area unit map~\cite{schiewe2021distortion}) associates each area of a geographic subdivision of the plane, e.g., an administrative region, with a cell of a regular grid. 
Applications of grid maps are, e.g., in news and media~\cite{Berkowitz2015,Radburn2016} or in the visualization of research results~\cite{Slingsby2017,Kelly2013,DBLP:journals/cartographica/WoodSD11}. From a practical point of view the topic is addressed with a wide range of blog-posts, reports, and web-tools~\cite{Shaw2016,Wongsuphasawat2016,Zachary2015}. From a computational point of view a main challenge in computing a grid map is to find a suitable mapping between the geographic regions and the grid cells~\cite{schiewe2021distortion}. Eppstein et al.~\cite{DBLP:journals/ijcga/EppsteinKSS15} presented an approach that tackles this task considering three criteria: preserving (i)~location, (ii)~adjacencies, and (iii)~relative orientation. 
Meulemans et al.~\cite{meulemans2016small} presented an approach for generating grid maps for small multiples with a special focus on white space handling. 
McNeill and Hale~\cite{mcneill2017generating} published an approach where multiple grid maps are generated and the user can choose the most suitable one. 
Later a more complex problem setting was introduced where the geographic regions are partitioned into sets~\cite{DBLP:journals/tvcg/MeulemansSS21}. 
Now the matching between the geographic regions and cells has to satisfy that the union of cells of one set shall represent the corresponding geographic region well. 
Especially the latter version, where the subdivisions' elements are partitioned into sets, is similar to our problem setting. Nevertheless, our problem is different, since we can handle set systems with arbitrary set relations and we do not aim at providing a good geographic representation.

Another related technique %
was recently presented by Bekos et al.~\cite{Bekos2022}. They discuss the embedding of spatial hypergraphs, where each vertex has a spatial location and each hyperedge can join any number of vertices. They produce an embedding such that each vertex lies on a rectangular or concentric grid. As optimization criteria  the vertex displacement should be small and  the embedded edges should be visually clear. Their approach differs in several ways from MosaicSets: (i) their visualization leads to a sparse placement of vertices in a grid, while we aim at a compact representation using cells and (ii) they consider spatial closeness with respect to the given positions in their objective.

We want to emphasize that the challenges in geovisualization differ since the positions of the elements in the visualization should at least coarsely reflect their geographical locations. Hence, applying the presented techniques to our problem setting for non-spatial set systems would not lead to a satisfying solution. 

\subsection{Maps as a Metaphor}\label{sec:relwork-graphRepresentation}
As cartographic maps are well known, some visualization methods, including the previously discussed MapSets~\cite{ehkp-mvecg-15} and also MosaicSets, use this intuitive presentation of information to visualize non-spatial data; see the survey by Hogr\"afer et al.~\cite{hografer2020state}.
GMap~\cite{DBLP:conf/gd/GansnerHK09,DBLP:conf/recsys/GansnerHKV09} is an example using such a map metaphor that is in some sense similar to MosaicSets. Both approaches show non-spatial data. In GMap the input is a general graph and we are looking for a set of touching polygons that represent clusters in the data, which can be singletons or sets of closely connected nodes. Thereby the data and their underlying structure shall be represented in a comprehensible and visually appealing way. Other than our approach, GMap does not aim at a regular grid representation and their input is a graph and not a set system.

\subsection{Districting}\label{sec:relwork-districting}
The delineation of districts is a common problem in geographical information science and planning that is usually referred to as districting or zoning; for a detailed discussion we refer to \cite{RiosMercado2020}. 
Applications of districting include but are not limited to the definition of electoral districts \cite{validi2021imposing}, school districts \cite{caro2004school}, and ticket zones for public transportation systems \cite{tavares2007multiple}.
Usually, one aims at building the districts automatically on the basis of a given partition of the plane. For example, census tracts are grouped to electoral districts. 
Since MosaicSets groups grid cells to regions, we can adapt some ideas and techniques from existing districting methods.
However, MosaicSets
is not a districting method as
it also includes the assignment of elements to cells.

In the context of information visualization and cartography, districting methods have been proposed for map generalization tasks.
Oehrlein and Haunert~\cite{OehrleinH17} have presented a method for grouping the areas of a choropleth map to larger areas in order to obtain a less fine-grained choropleth map. Similar approaches have been used for land-use and land-cover maps \cite{haunertwolff2010b,GedickeCaGIS2021}. 
Although the constraints and objectives of districting depend on the concrete task at hand, some approaches have turned out to be rather generally applicable. This holds
for meta-heuristics such as simulated annealing \cite{ricca2008local} or evolutionary algorithms \cite{tavares2007multiple}.
These, however, usually require a feasible start solution, which is not easy to obtain in our situation.
Therefore, we choose an approach based on integer linear programming over meta-heuristics,
which has the additional advantage that it guarantees optimal solutions.
For MosaicSets we adapt
a model ensuring contiguous districts \cite{shirabe2009districting} and objectives favoring geometrically compact districts
from existing ILP formulations for districting.

A district is contiguous if every two points in it are connected via a curve that is entirely within that district. Contiguity is, thus, a qualitative property of a district.
Compactness, on the other, hand is a quantitative property. It can be expressed with various measures,
many of which are based on the area $A$ and the perimeter $P$ of a district. For example, the Polsby-Popper score of a district \cite{polsby1991third} is defined as 
\begin{equation}\label{eq:polsby-popper}
PP = 4\pi \frac{A}{P^2}
\end{equation}
and the Schwartzberg score \cite{schwartzberg1965reapportionment} as the square root of it.
Note that both scores evaluate to one for disks and attain values close to zero for highly non-compact shapes.  Furthermore, if the area $A$ is fixed,
maximizing a district's Polsby-Popper or Schwartzberg score  reduces to minimizing its perimeter.
Another general approach to measuring a district's compactness is based on eccentricities, i.e., distances between a center of a district and the areas assigned to it.
For MosaicSets we  use both perimeter- and eccentricity-based compactness measures.

\subsection{Support Graphs}\label{sec:relwork-supportgraphs}
Finally, we discuss on a more formal level the drawing of hypergraphs as mathematical models of set systems by means of suitable support graphs. 
Let $\mathcal H=(S,C)$ be a hypergraph with vertex set $S$ and hyperedge set $C$. A graph $G=(V,E)$ is called a \emph{support graph} (or simply \emph{support}) of $\mathcal H$ if $V=S$ and each hyperedge $c \in C$ induces a connected subgraph in $G$, i.e., the subset of edges in $E$ that connect pairs of vertices in $c$ is connected and spans all elements in $c$. 
Support graphs play an important role for visualizing hypergraphs (or set systems) since in a drawing of its support graph all hyperedges can be traced or highlighted as connected shapes, similarly to the regions representing a set in an Euler diagram.
Supports need to satisfy certain quality criteria in order to be considered useful for set visualization. 
Most prominently, a good support graph should be planar so that the shapes created by tracing the corresponding hyperedge subgraphs intersect only if their hyperedges share common vertices. 
Supports have been primarily studied from a theoretical perspective and it is known that deciding whether a given hypergraph admits a planar support is NP-complete~\cite{jp-hpcdvd-87}, even if the support must be 2-outerplanar~\cite{bkmsv-psh-11a}.
In contrast, testing whether a hypergraph has a support that is a path, cycle, tree, or cactus~\cite{ks-cmoct-03,jp-hpcdvd-87,bcps-bhaho-11,bkmsv-psh-11a} can all be done in polynomial time.
A practical approach based on computing hypergraph supports and rendering the visualization in the style of a schematic metro map with sets as transit lines is MetroSets~\cite{JacobsenWKN21}. 
While the above results concern supports of abstract hypergraphs, the problem of computing supports has also been studied for spatial hypergraphs, where the vertices have fixed positions in the plane and the support graphs need to preserve these vertex positions~\cite{cgmny-spssh-19}. 

MosaicSets does not assume fixed vertex positions, but it shares the use of support graphs for set visualization. 
There is, however, an important difference. 
We are using a predefined and bounded-size planar grid graph as a host graph into which we want to embed the given set system by means of finding a subgraph of the grid graph that serves as a support. 
This question has not been considered in the literature on hypergraph supports so far, which restricted only the class of support graphs, e.g., to trees, but not to graphs that are subgraphs of a specific host graph. 
One related result deals with the question if a given hypergraph has a tree support, where each vertex has a degree bound that must be respected by the computed tree support. 
Buchin et al.~\cite{bkmsv-psh-11a} proved that such a support can be computed in polynomial time (if one exists). 
However, even if our host grid implies a degree bound for each vertex, the grid-subgraph requirement is more restrictive (e.g., not every tree with maximum degree 4 can be embedded as a subgraph into a degree-4 square grid).
At the same time, we do not limit ourselves to tree supports, but consider the larger class of planar supports.

\section{Towards a Formalization of MosaicSets}\label{sec:formalization}

Let us recap the main principle of MosaicSets based on the example in Fig.~\ref{fig:sketch}.
The input is a set system, where each set is a department or a project and each element is a research group (or a key value representing it). The output is a map-like visualization where each element occupies a cell of a grid and each set is a contiguous region.
In this section, we first discuss the ideas that led to this principle as well as the design goals that we follow when generating the visualization (Sect.~\ref{sec:design})
and then present a formal definition of the problem  (Sect.~\ref{sec:problem}) as well as a short proof of its NP-completeness (Sect.~\ref{sec:hard}).

\subsection{Design Decisions} \label{sec:design}

The original intention behind MosaicSets 
is to visualize the structure of an institution (e.g., faculty) broken down   into smaller units (e.g., departments and, as atomic elements, research groups)
as well as intra-institutional collaborations.
The visualization should support the discussions at meetings of the institution's board and its commissions. 
In many situations, questions of strategic importance need to be discussed: What are the strengths and weaknesses of the faculty?
How large are the different departments, in terms of both absolute and relative sizes?
Between which parts of the faculty are the links well established and where are they underdeveloped? 
For instance, a frequent situation is that the chair of a recruitment commission presents the concept of a new professorship at a faculty board meeting. Then, a visualization should enable the commission chair to explain how the new position will be embedded within the faculty's organizational structure and research networks. 
Another application that we target is the communication of the faculty's structure to the outside world; e.g., the visualization may be published on the faculty's website or used in meetings with representatives from higher decision levels or funding agencies. 
The applications of MosaicSets are certainly not limited to academia, since, e.g., companies and non-profit organizations may be faced with similar strategic questions. 
Generally speaking, the above scenarios suggest creating a clean and intuitive visualization, which can easily be understood by the various stakeholders who are interested and/or already knowledgeable in the data, without a need for extensive training on how to read and use it. 
We aim for a visually attractive design that invites potential users to engage themselves with the visualization.
The visualization should enable the users both to get a quick overview of the data, but also to explore the relationships and patterns between and within the different entities of the depicted organization.
Based on this target use case, we emphasize that we do \emph{not} see MosaicSets as an exploratory or analytical tool for big data, which would require additional efforts for data simplification and aggregation.

\begin{figure}
    \centering
    \includegraphics[scale=.8]{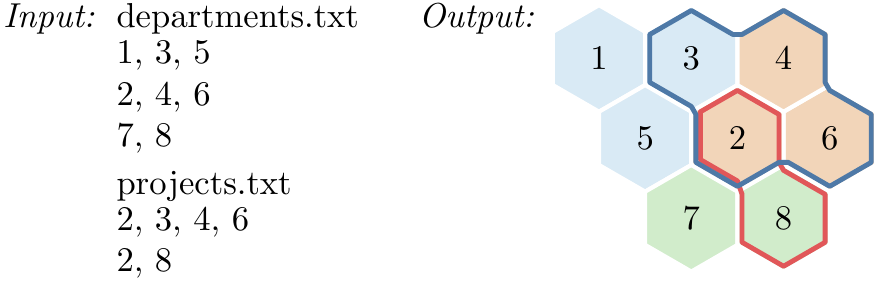}
    \caption{Required input files and a solution using a basic rendering style. Additionally, our method requires several parameter values as input.}
    \label{fig:sketch}
\end{figure}

The targeted scenarios require a visualization providing a strategic overview, while also being intuitive and engaging. For this reason, MosaicSets has been designed as a \emph{map-like visualization}: Just like the countries in a political map, each department occupies a dedicated territory.
A further essential design decision is to choose an \emph{equitable representation} of each research group: All research groups should be represented with an area of the same size. %
Requiring a map-like visualization and an equitable representation naturally leads to the use of a \emph{regular tessellation}, which directly entails the advantages of a \emph{schematic map}: low clutter and clear representation of topological relationships. Among the three existing types of regular tessellations, we favor \emph{hexagonal} or \emph{square tessellations} since their tiles are translated copies of each other, which does not hold for triangular tessellations as the third type. Moreover, in a triangular tessellation every tile has only three neighbors, which is a severe limitation when it comes to the embedding of set systems.
In square and hexagonal tessellations, on the other hand, there are four respectively six neighbors per tile. These tessellations are commonly used for maps in strategic board and computer games, which probably helps to understand the visualization as a metaphorical map.  

Next, we discuss the criteria that we use when assigning the elements of the sets to grid tiles.
Achieving that the sets are represented as \emph{contiguous regions} is our main goal, since research on human perceptual organization provides evidence that
``a connected region of uniform visual properties---such as
luminance or lightness, color, texture, motion, and possibly other properties as well---strongly tends to be organized as a single perceptual unit''
\cite{Palmer1994}.
(Note that the term ``principle of contiguity'' in psychology refers to a different concept, i.e., a close temporal relationship between stimuli and responses that leads to their association \cite{Lachnit2006}.)
Since the contiguity requirement can render the problem infeasible, we will consider problem variants with relaxed versions of it.
However, we will not relax the contiguity requirement for the sets corresponding to the organizational units that partition the institution of interest: For example, we strictly enforce that every department of a faculty is represented as a contiguous region, but we tolerate non-contiguous regions for projects in some problem variants.
This is because especially the departments should be displayed as territories to
give the visualization the look of a political map and because the
contiguity requirement can always be satisfied when embedding a partition (if a sufficiently large grid is used). This ``political map'' can then be used 
as the base map for visualizing the projects.
If the number of projects is small,
one may be able to visualize them all at the same time, e.g., by displaying the boundaries of different projects with different colors.
However, as this may quickly lead to visual clutter, we also consider displaying the base map with only one project at once.
For example, one can use \emph{small multiples} with different versions of the map showing different projects 
or \emph{interactive maps} allowing the user to select a single project or a small number of projects for display on top of the base map. Subject to the contiguity requirement, we aim in particular at maximizing the \emph{compactness} both of the visualization as a whole and of the region for each individual set,
since according to Gestalt theory humans tend to notice compact forms~\cite{Ehrenzweig1953} and since maximizing the compactness results in low visual clutter.

\subsection{Formal Problem Definition} \label{sec:problem}
As input we require a collection $C = \{S_1, \hdots, S_k\}$ of sets and an empty grid map in the form
of the adjacency graph $G = (V, E)$ of its cells.
We call $G$ the \emph{host graph} and observe that it is planar.
Let $S = \bigcup_{i=1}^k S_i$. 
We require $|V| \ge |S|$, i.e., the host graph must be large enough to accommodate all elements in $S$.
With $S$ and $C$, we obtain a hypergraph $\mathcal H = (S, C)$, where $S$ is the vertex set and  $C$ is the set of
hyperedges. 
Now, our basic task is as follows:
Find an injection $f\colon S \rightarrow V$ such that for $i = 1,\hdots,k$ the subgraph of $G$ induced by $\{f(s)\mid s \in S_i\}$ is connected. 
Note that $f$ being an injection implies that it maps each element in $S$ to exactly one node of $G$ and at most one element to each node; hence every element is assigned to one grid cell without any double occupancy.
Moreover, the subgraph  of $G$ induced by some $V'\subseteq V$ is $G' = G[V'] = (V',E')$ with $E' = \{\{u,v\} \in E \mid u,v \in V'\}$; 
hence requiring that $\{f(s)\mid s \in S_i\}$ induces a connected subgraph for each $S_i \in C$ is equivalent to the support graph property and hence implies that $G[V']$ for $V' = \{f(s) \mid s \in S\}$ is a planar support graph of $\mathcal H$. Furthermore, it
ensures that the region for each set $S_i$ is contiguous.

In order to tackle the basic task, we introduce a directed bipartite graph $B = (S \cup V, F)$ 
with $F = \{(s,v) \mid s \in S \wedge v \in V\}$ 
that models all possible assignments between elements and grid nodes as directed edges.
Selecting an assignment $(s,v) \in F$ thus means setting $f(s) = v$.
Finally, to differentiate between different feasible solutions, we introduce a generic cost $w(s,v) \ge 0$ for every possible assignment $(s,v) \in F$.
To express the compactness of the whole visualization as a basic optimization objective, we use 
\begin{equation}
w(s,v) = \left\| v.p - \mu \right\|\change{^2}\,, \label{eq:cost1}
\end{equation}
where $v.p \in\mathbb{R}^2$ is the position vector of the center of gravity of the cell represented by $v$ and $\mu\in\mathbb{R}^2$ the position vector of  the center of gravity of the set of all cells.

By minimizing the total cost with this setting, allocations of cells close to the center of the visualization are favored, yielding a compact visualization.
We will later refine this problem definition to also express that each individual set should be represented with a compact region and to express the compactness of a region based on its perimeter.

\subsection{Computational Complexity}\label{sec:hard}

We can prove that the problem of computing a support of an arbitrary hypergraph that is a subgraph of a planar host graph is NP-complete. 

\begin{theorem}
For a given planar graph $G=(V,E)$ and a hypergraph $\mathcal H = (S,C)$ with $|S| \ge |V|$ it is NP-complete to decide if there is a subgraph of $G$ that is a support graph of $\mathcal H$.
\end{theorem}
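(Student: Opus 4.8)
\emph{The plan} is to prove membership in NP and then NP-hardness by a reduction from the \emph{Hamiltonian Path} problem, which is NP-complete even for planar graphs, and in fact for grid graphs (a classical result of Itai, Papadimitriou, and Szwarcfiter). Membership in NP is immediate: using an injection $f\colon S \to V$ as a certificate, we test for every hyperedge $c \in C$ whether the induced subgraph $G[\{f(s) \mid s \in c\}]$ is connected; each test runs in linear time via a graph search, and since $|C|$ is polynomial in the input size, the whole verification is polynomial.

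For hardness, I would turn a Hamiltonian-path instance directly into a support-subgraph instance. Given a planar graph $G_0=(V_0,E_0)$ with $n=|V_0|$, I set the host graph to be $G := G_0$ (still planar), the element set to $S := \{1,\dots,n\}$, and the hyperedge set to the chain of two-element sets $C := \{\,\{i,i+1\} \mid 1 \le i \le n-1\,\}$. The construction is clearly polynomial, and it yields $|S| = |V| = n$, so that every injection $f\colon S \to V$ is a bijection; in particular the produced instance lies in the regime $|S| \ge |V|$ of the statement.

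The correctness argument is the heart of the proof and rests on a single observation: a two-element hyperedge $\{i,i+1\}$ induces a connected subgraph under $f$ exactly when $f(i) \sim f(i+1)$ in $G$. Consequently, a feasible assignment $f$ is precisely a bijection for which the images $f(1),f(2),\dots,f(n)$ are consecutively adjacent, i.e., trace a Hamiltonian path of $G_0$. Hence the constructed instance admits a support subgraph if and only if $G_0$ has a Hamiltonian path, which together with NP-membership establishes NP-completeness.

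The step requiring the most care is not the gadget but the choice of source problem: the chain of pairwise hyperedges is the minimal device that linearizes the assignment, so everything hinges on starting from a Hamiltonicity result that is already hard within the host class we care about. Instantiating $G_0$ as an arbitrary planar graph proves the theorem as stated. Moreover, since \emph{Hamiltonian Path} remains NP-complete for grid graphs, taking $G_0$ to be a square grid graph shows that the embedding problem is hard even when the host is literally a square grid---the setting actually used by MosaicSets---so the hardness is not an artifact of exotic planar hosts.
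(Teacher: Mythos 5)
Your proof is correct and takes essentially the same route as the paper: both reduce from a planar Hamiltonicity problem by letting the host graph be the input graph itself and encoding the elements $1,\dots,n$ with a sequence of two-element hyperedges $\{i,i+1\}$, so that a support subgraph is exactly a Hamiltonian path (your chain) or a Hamiltonian circuit (the paper's version, which closes the chain with $\{|V|,1\}$ and reduces from \textsc{Planar Hamiltonian Circuit} on cubic 3-connected planar graphs). The choice of path versus cycle and the corresponding source problem is immaterial; both reductions are valid.
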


\begin{proof}
We reduce from the NP-complete \textsc{Planar Hamiltonian Circuit} problem~\cite{DBLP:journals/siamcomp/GareyJT76}, which is defined as follows. Given a planar, cubic, 3-connected graph $G=(V,E)$ determine if $G$ has a Hamiltonian circuit, i.e., a closed path that visits each vertex of $V$ exactly once and returns to its starting point.

It is easy to see that our decision problem belongs to the class NP as verifying whether a given subgraph of $G$ is actually a support of $\mathcal H$ can be done in polynomial time. For the hardness reduction, let $G=(V,E)$ be the graph for which we want to find a Hamiltonian circuit and define it as the planar host graph for our hypergraph support  problem. Further define the hypergraph $\mathcal H=(S,C)$ with $S=\{1,2,\dots,|V|\}$ and $C=\{ \{1,2\}, \{2,3\}, \dots, \{|V|-1,|V|\}, \{|V|,1\} \}$. Now if we can find a support of $\mathcal H$ in $G$, this support maps each element in $S$ to a unique vertex in $V$ and each hyperedge in $C$ to a unique edge in $E$; this support is thus necessarily a Hamiltonian circuit and thus exists if and only if $G$ is a Yes-instance of \textsc{Planar Hamiltonian Circuit}. 
\end{proof}

\section{An Approach Based on Integer Linear Programming} \label{sec:ilp}

In this section we present our approach using Integer Linear Programming, which is a general method for solving problems of combinatorial optimization.
Choosing this approach means that the solution to a problem of interest is encoded with an $n$-dimensional vector of integer variables ${\bf x} \in \mathbb{Z}^n$.
The aim is to minimize ${\bf c}^\intercal  {\bf x}$ for a given vector ${\bf c} \in \mathbb{R}^n$ while satisfying $A {\bf x} \succeq {\bf b}$ for a given matrix $A \in \mathbb{R}^{m \times n}$ and a given vector ${\bf b} \in \mathbb{R}^m$, where $\succeq$ refers to a row-wise comparison with $\ge$.
A problem encoded in this form is called integer linear program (ILP).  Although even the best known algorithms for solving ILPs  have an exponential worst-case running time, there exist sophisticated ILP solvers that often perform well in practice.
Hence, integer programming is a promising approach, especially for NP-hard problems such as our hypergraph embedding problem.
For more details about integer programming we refer to the textbook by Nemhauser and Wolsey~\cite{DBLP:books/daglib/0090563}.
Here, we focus on presenting the ILP for our problem.
First, in Sect.~\ref{ssec:basic_ilp}, we describe a basic ILP for finding a hypergraph support of the given set system within the given host graph.
We then present extensions to express the preference for compact regions (Sect.~\ref{ssec:compactness}),
a relaxed version of the contiguity requirement (Sect.~\ref{ssec:relaxation}), and a technique for reducing the number of variables of the ILP (Sect.~\ref{sec:better_ilp}).

\subsection{A Basic Integer Linear Program} \label{ssec:basic_ilp}

We first present the variables, the objective function, and the constraints of an ILP for embedding a set system (i.e., a hypergraph $\mathcal H = (S,C)$)
into a planar host graph $G=(V,E)$ while ensuring the connectivity of the sets. More precisely, solving the ILP yields a hypergraph support of $\mathcal H$ in $G$ if it exists.

\paragraph{Variables} We introduce one binary variable $x_{s,v} \in \{0,1\}$ for each potential assignment of an element $s \in S$ to a grid node $v \in V$ indicating whether it is selected ($x_{s,v}=1$) or not ($x_{s,v}=0$).
\begin{equation}
 x_{s,v} \in \{0,1\} \quad \forall (s,v) \in F
\end{equation} 
To ensure the contiguity of the sets, we adapt an approach by Shirabe~\cite{shirabe2009districting} for contiguity-constrained districting tasks. 
For this we introduce a directed version $\tilde{G} = (V, \tilde{E})$ of the undirected grid graph $G=(V,E)$,
by defining two opposite directed edges for each edge of $G$, i.e., $\tilde{E} = \{(u,v), (v,u) \mid \{u,v\} \in E\}$.
We use integer variables $y_{u,v}^{i}$ to model a multi-commodity flow in $\tilde{G}$.
Index $i$ refers to a set $S_i$ and $(u,v)$ to one directed edge of $\tilde{G}$.
\begin{equation}
y_{u,v}^{i} \in \{0,1,\hdots, |S_i| - 1\} \quad \forall i \in \{1,\hdots, k\}, \forall (u,v) \in \tilde{E}
\end{equation}
For each set $S_i$ there is one commodity. We arbitrarily select one element of set $S_i$ as the set's \emph{center}, which we denote as $c_i$. 
We will introduce constraints to ensure that
the cell to which $c_i$ is assigned serves as the sink of the flow of the commodity for $S_i$.
All other cells to which elements of $S_i$ are assigned serve as sources of that flow.
An example is shown in Fig.~\ref{fig:basic_ilp}.

\begin{figure}
\centering
\includegraphics[width=0.75\linewidth]{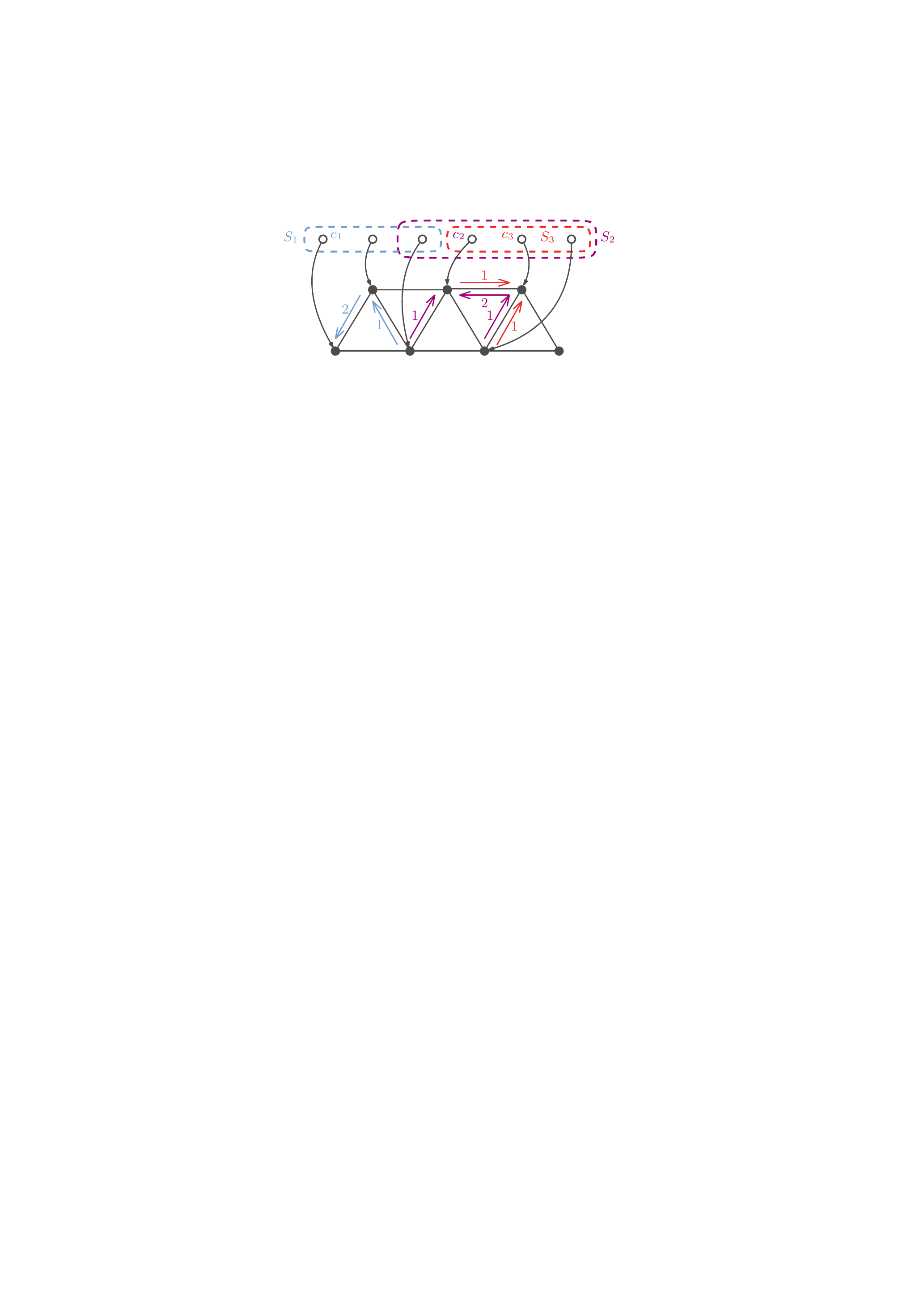}
\caption{An instance and a solution of the basic ILP.
Flows of different commodities are displayed with
arcs of different colors; they correspond to different sets. The arc labels show amounts of flow. The nodes to which a center (i.e., $c_1$, $c_2$, or $c_3$) is assigned serve as sinks.}
\label{fig:basic_ilp}
\end{figure}

\paragraph{Objective} The overall objective is to minimize the sum of costs for selected assignments using the generic cost function $w \colon S \times V \rightarrow \mathbb R_0^+$.
\begin{equation}
\text{Minimize} \quad \sum_{s \in S}\sum_{v \in V} w(s,v) \cdot x_{s,v} 
\label{eq:objective_basic}
\end{equation}

\paragraph{Constraints} We first introduce two constraints to ensure that the selected assignments satisfy the definition of an injection. 
\begin{equation}
\sum_{v \in V} x_{s,v} = 1 \quad \forall{s \in S} \label{eq:injection1}
\end{equation}
\begin{equation}
\sum_{s \in S} x_{s,v} \le 1 \quad \forall{v \in V} \label{eq:injection2}
\end{equation}
With the following two constraints we ensure the connectivity of each set based on the multi-commodity flow model.
\begin{equation}
\sum_{(v,w) \in \tilde{E}} y_{v,w}^i - \sum_{(u,v) \in \tilde{E}} y_{u,v}^i = \sum_{s \in S_i} x_{s,v} - |S_i| \cdot x_{c_i,v} \quad \forall{S_i \in C}, \forall{v \in V} \label{eq:flow1}
\end{equation}
\begin{equation}
\sum_{(u,v) \in \tilde{E}} y_{u,v}^i \le (|S_i|-1) \cdot \sum_{s \in S_i} x_{s,v} \quad \forall{S_i \in C}, \forall{v \in V} \label{eq:flow2}
\end{equation}
To demonstrate how the constraints for set $S_i$ control the flow of the corresponding commodity,
we discuss three different cases:
\begin{compactenum}[(1)]
\item If an element of $S_i$ other than the center $c_i$ is assigned to node $v$, the net-outflow from node $v$ (left-hand side of Constraint~(\ref{eq:flow1}))
is forced to 1 (right-hand side of Constraint~(\ref{eq:flow1})).
Moreover, the total inflow to node $v$ (left-hand side of Constraint~(\ref{eq:flow2})) is bounded from above by $|S_i|-1$ (right-hand side of Constraint~(\ref{eq:flow2})),
which is sufficiently large allowing $v$ to receive one unit from every other member of $S_i$.
\item If the center $c_i$ is assigned to $v$, the net-outflow from node $v$ is $1-|S_i|$ according to Constraint~(\ref{eq:flow1}),
which means that $v$ receives exactly one unit of flow for every other member of $S_i$.
\item If no element of $S_i$ is assigned to $v$, Constraint~(\ref{eq:flow2}) ensures that $v$ receives no incoming flow.
This and Constraint~(\ref{eq:flow1}) together imply that there is no outgoing flow from $v$.
\end{compactenum}
To summarize, the flow for $S_i$ can only enter or leave nodes to which elements of $S_i$ are assigned.
Since every such node (except the node to which $c_i$ is assigned) injects one unit of flow,
the connectivity requirement is satisfied.
We note that Constraints~(\ref{eq:flow1}) and (\ref{eq:flow2}) have been adapted in the following manner from Shirabe's flow model for districting tasks: The sum $\sum_{s \in S_i} x_{s,v}$ on the right-hand sides of the constraints replaces a single variable in Shirabe's model representing whether or not an area is selected for a certain district.

\subsection{Compactness of Regions Representing Sets}
\label{ssec:compactness}

Setting the cost function $w$ in  Objective~(\ref{eq:objective_basic}) as defined in Equation~(\ref{eq:cost1}),
the basic ILP produces a visualization that is compact as a whole.
However, this approach does not produce a visualization in which each individual set is represented as a geometrically compact region.
To achieve this, we propose two approaches:
Measuring compactness based on the regions' perimeters and based on eccentricities.  

\paragraph{Compactness based on perimeters}

As discussed in Sect.~\ref{sec:relwork-districting} the perimeter of a region is an appropriate measure of its compactness if the region's area is fixed.
This holds in our situation since the number of cells for each region is prescribed with the size of the corresponding set and since all cells have the same size.
To aggregate the compactness over all regions we simply minimize the sum of their perimeters.
We do this by giving a constant bonus for every set $S_i$ and every edge $\{u,v\}$ of $G$ that is within the region for $S_i$.
Maximizing the total bonus indeed corresponds to minimizing the sum of perimeters.
This is because every edge of $G$ corresponds to a boundary of constant length between two adjacent cells.
Including an edge in a region thus reduces the sum of perimeters by a constant amount.

We use additional binary variables to express the objective:
\begin{equation}
z_{u,v}^i \in \{0,1\} \quad \forall S_i \in C, \forall \{u,v\}\in E
\end{equation}
We force $z_{u,v}^i$
to zero if $\{u,v\}$ is \emph{not} within $S_i$, i.e., if to $u$ or to $v$ (or to both) no element of $S_i$ is assigned:
\begin{equation}
z_{u,v}^i \leq \sum_{s \in S_i} x_{s,u}\,,\qquad z_{u,v}^i \leq \sum_{s \in S_i} x_{s,v}\qquad\forall S_i\in C,\forall\{u,v\}\in E
\end{equation}
If both $u$ and $v$ are contained in the region for $S_i$, variable $z_{u,v}^i$ is free to receive value zero or one.
Hence, if we include it with a positive coefficient in a maximization objective, it will receive  value one whenever possible.
Accordingly, with the following objective, we minimize the total length of the boundaries of the regions representing sets:
\begin{equation}
\text{Maximize} \quad \sum_{S_i\in C}\sum_{\{u,v\}\in E} z_{u,v}^i \label{eq:obj-perimeter}
\end{equation}
In the experiments in which we optimized Objective~(\ref{eq:obj-perimeter})
we did not consider any other optimization objective, i.e., we replaced Objective~(\ref{eq:objective_basic}) with Objective~(\ref{eq:obj-perimeter}).
However, a combination of the two objectives would be possible by minimizing Objective~(\ref{eq:objective_basic}) minus Objective~(\ref{eq:obj-perimeter}), scaled by some constant factor expressing its importance.

\paragraph{Compactness based on eccentricities}

Eccentricity-based compactness measures usually charge costs proportional to distances or squared distances between a center of a district and centers of the mapping units contained in it.
Often, finding appropriate district centers is considered as an integral part of the districting problem at hand \cite{shirabe2009districting}. However, to speed up the computation, it is also common to prescribe the district centers, optimize the assignments of mapping units to the centers, and iterate with a new set of centers computed for the districts \cite{Hess1965}.
To implement this approach, we introduce $\mu_1, \hdots, \mu_k \in \mathbb{R}^2$ as the desired geometrical centers of the regions for sets $S_1, \hdots, S_k$.
We then apply the basic ILP as presented in Sect.~\ref{ssec:basic_ilp},
but instead of the assignment costs defined in Equation~(\ref{eq:cost1})
we use the following setting:
\begin{equation}
w(s,v) = \sum_{S_i \in C \colon s \in S_i} \left\| v.p - \mu_i \right\|\change{^2} \label{eq:cost2}
\end{equation}
This means that for every assignment $(s,v)$ we consider the sets containing $s$. For every such set $S_i$ we charge a cost proportional to the squared distance between the center of the allocated grid cell $v.p$ and the desired geometrical center $\mu_i$ of the region representing $S_i$.
Hence, it is favored to assign elements of $S_i$ to cells close to $\mu_i$.
Initially, we set $\mu_1 =  \hdots = \mu_k = \mu$,
meaning that the center $\mu$ of the entire grid is the desired geometrical center for every region.
We then compute a solution to the ILP with the assignment costs defined in Equation~(\ref{eq:cost2}). Afterwards, we compute the centers of gravity of the obtained regions and use these as $\mu_1, \hdots, \mu_k$ in a second run of the optimization algorithm. %
\change{This process is repeated multiple times until the centers $\mu_1, \hdots, \mu_k$ converge. In our examples, the changes of the center of each region become small after the third iteration, and fully converge after five iterations.}

When experimenting with the method %
we made an interesting observation: It took much longer to compute the first solution
(with the same center for every region) than the \change{subsequent solutions} %
(with different centers), even if in \change{all iterations} we performed a cold start of the solver.
A possible reason for this is that with the setting used in the first \change{iteration} %
there are many solutions of exactly the same quality. The solution in Fig.~\ref{fig:sketch}, for example, can be rotated or mirrored without changing the distances between the region centers and the center of the grid. This is not the case, however, with the setting used in the \change{subsequent iterations}%
, where the points  $\mu_1,\hdots, \mu_k$ differ from each other. 
This observation led us to the following approach: In the first \change{iteration}%
, instead of using $\mu$ as the center for every region, we draw a very small regular \change{$k'$-gon centered at $\mu$ with $k'$ referring to the number of base map sets. Each of the $k'$-gon's} corners \change{is set} as the center of one region \change{of the base map. The center of the overlay sets remains $\mu$.}
With this approach, the time needed for the first \change{iteration} %
decreased substantially.

\subsection{Relaxing the Contiguity Requirement}
\label{ssec:relaxation}

It is easy to imagine set systems that cannot be embedded into a hexagonal or square grid. Think, for example, of  seven sets, each of which contains two elements: one element that is shared by all sets and one element that occurs in no other set.
Moreover, even if an embedding exists, enforcing contiguity for too many sets can substantially decrease the quality of the solution with respect to the compactness objective. 

Therefore, if the task is to visualize many sets as overlays on top of the same base map, we suggest enforcing contiguity only for the regions of the base map. However, the compactness should be optimized for all sets (with respect to any choice of our compactness measures).
This approach is always feasible if a sufficiently large grid is used, since the sets of the base map constitute a partition.
Our current implementation renders any non-contiguous region as multiple regions, using the same color for their boundaries. \change{For an example, see a visualization of the Austrian parliament in the supplemental material.}
As a topic for future research, we also consider computing connected visual representations for non-contiguous regions, e.g., by overlaying linear connections between different parts of a region on top of the boundaries of the base map.

\subsection{An Integer Linear Program with Fewer Variables} \label{sec:better_ilp}

We greatly reduce the number of variables of our ILP by exploiting that, 
often, two distinct elements $s, t \in S$ occur in exactly the same sets.
More formally, we say that $s, t \in S$ are \emph{indistinguishable}  if 
\begin{equation}
s \in S_i \Leftrightarrow t \in S_i  \quad  \forall S_i \in C \,. \label{eq:contraction1}
\end{equation}
Note that when defining the assignment costs according to Equation~(\ref{eq:cost1}) or Equation~(\ref{eq:cost2}),
it holds for every two indistinguishable set elements $s$,$t$ and every cell $v$ that $w(s, v) = w(t, v)$.
Hence, given any solution, we can swap the cells of $s$ and $t$ and obtain a solution of the same quality.
This means that we can think of $s$ and $t$ as a single element that needs to be assigned to two grid cells.
\change{We note that carelessly removing such indistinguishable elements may change the existence of planar support graphs~\cite{bkkns-tsdh-16}.}
Hence we implement this idea by contracting \change{(but not removing)} indistinguishable set elements
in such a way that we obtain a smaller set system in which each element $s$ represents 
a number $\alpha(s)$ of original set elements.
This smaller set system has the same set relationships as the original one.
To assign $s$ to $\alpha(s)$ many cells, we replace the right-hand side of  Constraint~(\ref{eq:injection1}) with $\alpha(s)$.

When contracting elements, we need to take care that there remains at least one original element in each set that we can select as the set's center.
However, it is not too difficult to contract as many nodes as possible under this requirement.

\section{Rendering}\label{sec:rendering}

\begin{figure}
    \centering
    \includegraphics[scale=0.35]{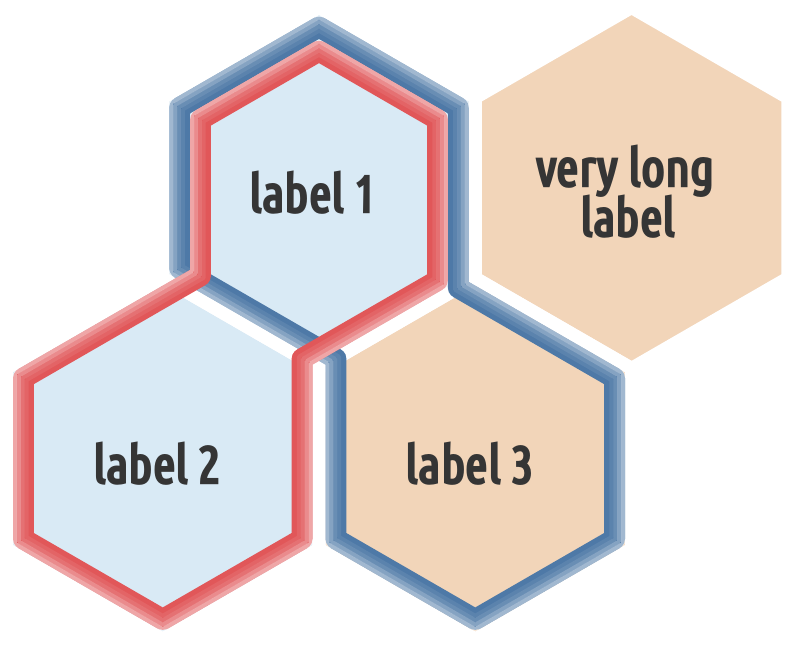}\qquad
    \includegraphics[scale=0.35]{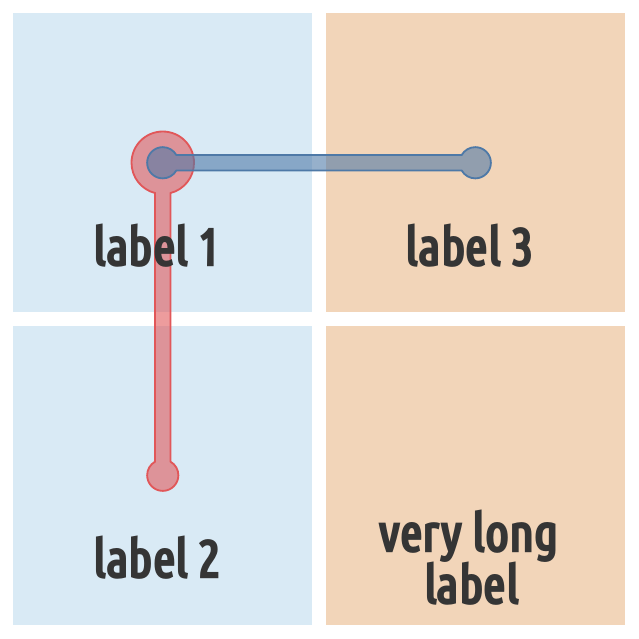}
    \caption{Different combinations of proposed rendering styles. %
    The left visualization shows a hexagonal tessellation with boundary style, the right visualization a tessellation with squares and Kelp style.}
    \label{fig:rendering}
\end{figure}

Our implementation of MosaicSets provides different rendering styles for the base map and set overlays as seen in Fig.~\ref{fig:rendering}. Firstly, we render the base map as a tessellation of either hexagonal or square grid cells. Additionally, we add spacing between adjacent cells to better distinguish between the inside and outside of the overlay regions. As computing the embedding requires a-priori information about the connectivity of the grid graph, the style of the base map must be declared in advance.

Secondly, we provide two different styles to render a set's region as an overlay on the base map. The first style renders the boundary of the region of the union of cells for each set. The boundary itself is drawn on the inner side of the boundary of the occupied cells. Potentially, boundaries of two or more sets in the same cell could overlap; we define an arbitrary order over all sets, draw the boundaries of sets according to the order but draw the boundary either next to an existing boundary or a cell's boundary. The thickness of the boundary can be specified in advance. \change{To better distinguish different overlay regions, we use a gradient coloring so that the brightness of a region's boundary decreases from the outside to the inside.} We refer to this style as \emph{boundary style}.

The second overlay style uses a similar style as seen in Kelp diagrams~\cite{dinkla2012kelp}. As we compute a flow in Sect.~\ref{sec:ilp} we can extract the subgraph of the flow network and use it to draw a Kelp-like overlay. Here, we represent the nodes of the subgraph as filled circles and use thick straight-line segments to represent the edges. Again, we have to define an arbitrary order over the sets first and process sets by said order as otherwise the drawn nodes and edges would overlap if two sets use either the same node or edge in the grid graph. By counting the number of sets already using certain edges and nodes in the grid graph, we can scale the diameter of circles and the thickness of straight-line segments such that all sets are visible. We refer to this as \emph{Kelp style}.    

Labels are either placed in the center of a cell or below the center if the Kelp style is used. The label size itself is scaled to fit the label in a cell of the base map. We set a maximum font size and check if all labels fit in their cell. If this is not the case we reduce the font size until either all labels fit or a minimum font size is reached. We automatically apply line breaks for white space and pre-defined delimiter characters, but only render a line break if it is necessary to fit the label in the cell.

Lastly, the color assignment of \change{the sets} is handled by using two palettes of different colors. This decision was guided by the idea that using different colors in each palette allows for a better differentiation between base map and overlays. For the base map we assign colors from a palette of light colors while for the overlays we use bright colors.

\section{Evaluation}\label{sec:evaluation}
In the following, we assess the quality of MosaicSets by discussing the opinions of experts and evaluating quantitative experiments. 

\subsection{Expert Interviews}\label{sec:eval-expertinterview}
\change{To evaluate MosaicSets, we conducted three separate interviews with two domain experts, who are administrative members of the Agricultural Faculty of the University of Bonn, and one designer who works for a design agency in the public transit sector. Prior to the interviews, the two administrative members collaborated to manually create a visualization of the research groups in the Agricultural Faculty of the University of Bonn, which actually inspired the idea of MosaicSets. We note that all visualizations showed in the interviews represent this running example. Combining expertise in the application domain and in graphic design, we consider the three interviewees suitable experts to evaluate our visualizations.} The questionnaires with all illustrations \change{and questions} can be found in the supplemental material. 

\begin{figure}
\centering
\includegraphics[width=0.8\linewidth]{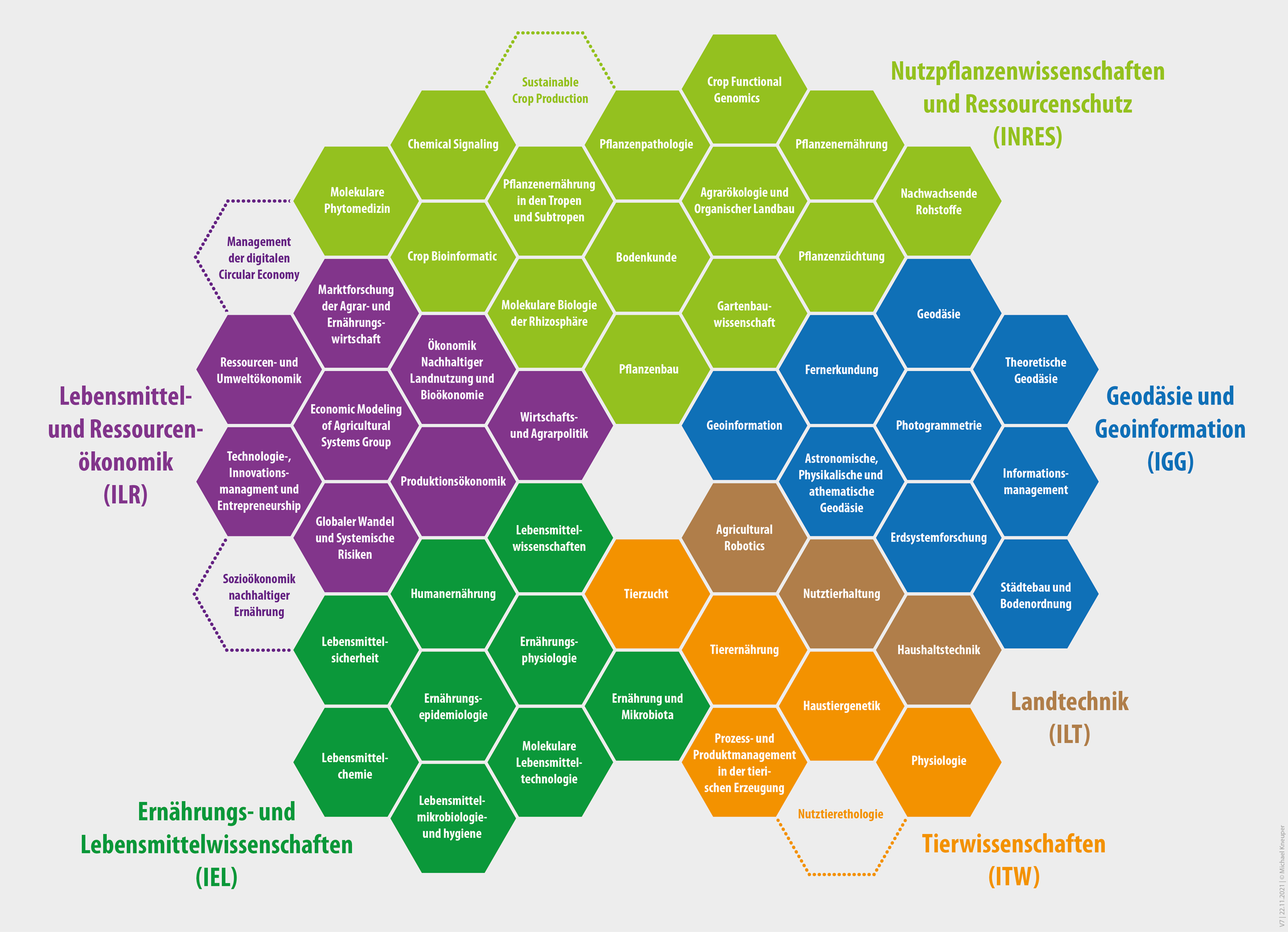}
\caption{Manually generated visualization showing the research groups of the Agricultural Faculty of the University of Bonn. Image credit to Dr.\ Susanne Plattes and Dipl.-Ing.\ Michael Kneuper. Figure~\ref{fig:teaser} illustrates MosaicSets of the same data set.}
\label{fig:expertSolution}
\end{figure}

First, we asked the experts to compare the manually generated visualization (see Fig.~\ref{fig:expertSolution}) with a visualization generated with MosaicSets. Both visualizations \change{show and optimize} only the base map. All experts emphasized as the primary difference the intentionally left blank cell in the middle of the manually generated representation where all departments are attached. \change{They stated that the empty cell helps to a create a common focus and to highlight the union of the faculty. However, one expert emphasized that the visualization generated with MosaicSets has less a hierarchical structure with respect to the departments' visualization and seems more flexible in terms of adding new data.} 

Next, we asked them to comment on \change{one MosaicSets visualization with hexagonal and one with square tiles}. They both represented the same sets consisting of six departments and two projects. \change{Although the experts perceived the square tiles as equally clear or slightly clearer than the hexagonal tiles, they emphasized that they generally found the hexagonal style more appealing.} They argued that the hexagonal tiles have more linking possibilities, which emphasizes the interrelationship of the departments more strongly. The entirety of the departments appeared as a unit in the hexagonal visualization. As an advantage of the visualization using square tiles, they mentioned the easier countability and identification of individual elements. The arrangement of the elements can easily be translated into rows and columns. Further, the experts pointed out that the visualizations differed with respect to the perceived hierarchy structure. While a hierarchy from the inside to the outside is perceived when hexagonal tiles are used, a hierarchy from the top to the bottom is implied when using square tiles. Overall, the experts saw advantages in both variants. For the running example, the hexagonal grid is more suitable since it better represents the union of the departments. \change{In the remaining part of the interviews we also focused on hexagonal visualizations.}

\change{We asked the experts to compare two versions of MosaicSets generated with different compactness measures. The first version is computed with compactness based on eccentricity. The second is a variant of compactness based on eccentricity that restricts the available grid cells to those used in the first iteration and therefore fixes the overall shape of the grid. In particular, in the first version the focus is on the compactness of each set and in the second version the overall map is more compact. The experts did not show a clear preference. While it is easier to differentiate the individual sets in the first version, the overall appearance is more homogeneous in the second version. The experts agreed that which version to choose depends on the application scenario. } 

Comparing MosaicSets with two, three and four overlaying projects, the experts pointed out a problem with visual clutter and separability when there are more than three superimposed projects. Especially, the distinction between inner and outer segments becomes more difficult. Hence, the experts suggested the use of small multiples showing the same base map but different projects. 

We also provided the experts with an interactive map allowing the display of only a single project or a chosen selection of projects (see the web page referenced at the end of Sect.~1). When the representation of many projects is important, the experts saw great merits in the interactive map. The experts suggested to add further interactions as highlighting of only the tiles belonging to the selected projects. \change{Further, they proposed adding alternative data representations, for example tables stating  cardinality, next to our visualizations.} We think these suggestions are well suited to be incorporated into our visualization. 

\change{We also asked the experts for general feedback and they} indicated that the proximity of research groups with joint interests or publications could be another criterion worth considering. Further, they would like to see a consistency criterion that maintains the basic arrangement of the departments' regions \change{when additional institutes, projects, or working groups are added or removed}. 

\change{In preliminary interviews we asked two of the experts on their opinion with respect to the rendering.} Regarding the overlay style, the experts agreed that boundary style is more suitable than Kelp style. Using Kelp style clarity is lost as the selection of tiles connected with segments is not intuitive. When using boundary style, the experts emphasized that choosing a high-contrast color scheme is important to ensure quick perception of the projects. 

Overall, the experts considered MosaicSets to be a valuable approach for visualizing set systems. In particular, the interactive visualization meets the experts' qualitative requirements. They pointed out that there is a high demand for such visualizations in a wide range of application areas, e.g., for internal sessions and meetings as well as for external exposure (e.g., via the faculty website). Due to the highly dynamic structure of the faculty (e.g., research projects starting or expiring) updated visualizations have to be generated frequently. The discussion with the experts revealed that the manual generation of a visualization is highly time-consuming. To create the visualization shown in Fig.~\ref{fig:expertSolution}, the corresponding two experts invested several working hours, a large part of which was needed to arrange the tiles. Hence, using MosaicSets can save a \change{substantial} amount of valuable working time.

\subsection{Tasks for Set Visualizations}\label{sec:eval-tasks}
In the following, we aim to assess MosaicSets with respect to the task taxonomy defined by Alsallakh et al.~\cite{DBLP:journals/cgf/AlsallakhMAHMR16}. The tasks are summarized into three groups:~(A)~tasks related to elements;~(B)~tasks related to sets and set relations; and~(C)~tasks related to element attributes. For some of these tasks we can directly state that MosaicSets is not able to solve them. We cannot solve the tasks of type~(C) since we do not consider any element attributes in MosaicSets. Further, tasks~A5--A7, B13 and B14 require an interaction technique that MosaicSets does not provide and task~B11 asks for set similarity measures that are not defined in our problem setting.
To assess which of the remaining 15 tasks can be solved with MosaicSets, we asked the experts from our interviews (see Sect.~\ref{sec:eval-expertinterview}) to \change{solve one example for each task} and to rate whether a task is 'fully', 'partially', or 'not' supported  (see the supplemental material for the full task list and the experts' ratings). We note that the experts' assessment was based on a static visualization with six departments and two projects. \change{All tasks were answered correctly by all experts.} The tasks A1, A3, B1--B10, B12 are considered fully supported \change{and the experts were able to solve them within a few seconds}. For tasks A2 and A4 the experts gave different answers; each twice 'fully supported' and once 'partially supported'. \change{The expert who rated the tasks as 'partially supported' took much longer to complete them ($>30$\,s).} 

In summary, we claim that almost all element- and set-based tasks except A5-A7, B11, B13, and B14 are supported by MosaicSets. Compared to other methods from the literature, MosaicSets performs like a combination of Euler-diagrams and frequency grids. In detail, for (A)-tasks MosaicSets performs better than Euler-diagrams, which do not support two (A)-tasks, and it performs nearly as good as frequency grids that support all (A)-tasks. For (B)-tasks MosaicSets performs better than frequency grids but not as good as Euler diagrams. Euler diagrams support all except B13, while frequency grids on the other hand support all (B)-tasks except B3, B4, B11, and B14.

\newcommand{\databonn}{\textsc{Bonn}}
\newcommand{\datavienna}{\textsc{Vienna}}
\newcommand{\datanationalrat}{\textsc{Parliament}}
\newcommand{\MosaicSetsP}{\textsc{MSP}}
\newcommand{\MosaicSetsE}{\textsc{MSE}}
\newcommand{\MosaicSetsEArea}{\textsc{MSEA}}

\subsection{Experimental Setup}
\label{ssec:setup}
We quantitatively compare different versions of MosaicSets: \MosaicSetsP{} with compactness based on \textbf{p}erimeters, \MosaicSetsE{} with compactness based on \textbf{e}ccentricities optimized \change{in multiple iterations}, \change{and \MosaicSetsEArea{}, which is a  variant of \MosaicSetsE{} that restricts the used grid cells to the assigned grid cells of the first iteration and therefore fixes the assigned \textbf{a}rea}. With all three versions we enforce all sets to form contiguous regions. 
For all versions we used the elimination of variables presented in Sect.~\ref{sec:better_ilp}, as it reduces the running time by an order of magnitude.

We used three data sets: \databonn{} consists of 51 unique elements and 9 sets, with 6 sets used as base map; \datavienna{} consists of 7 sets with 71 unique elements and 4 sets used as base map; and \datanationalrat{} consists of 8 sets with 178 unique elements, where 5 sets are used as base map. \databonn{} refers to the Agricultural Faculty of the University of Bonn and \datavienna{} to the Faculty of Informatics of TU Vienna. \datanationalrat{} refers to the Austrian parliament with parties as sets of the base map and interest groups as overlays. The host graph has as many rows as columns and is adapted to the number of unique set elements. We add one additional row and column to the minimum required number. 

We performed the experiments on an \change{AMD Ryzen 7 PRO 4750U with $16$\,GB} of memory, implemented in Java and used the ILP solver of Gurobi 9.5.1. %
\change{For all solutions we used a maximum optimality gap in Gurobi of $0.5\%$.} %
\change{Additional} visualizations are provided in the supplemental material.

\savebox{\mybox}{\includegraphics[scale=0.18]{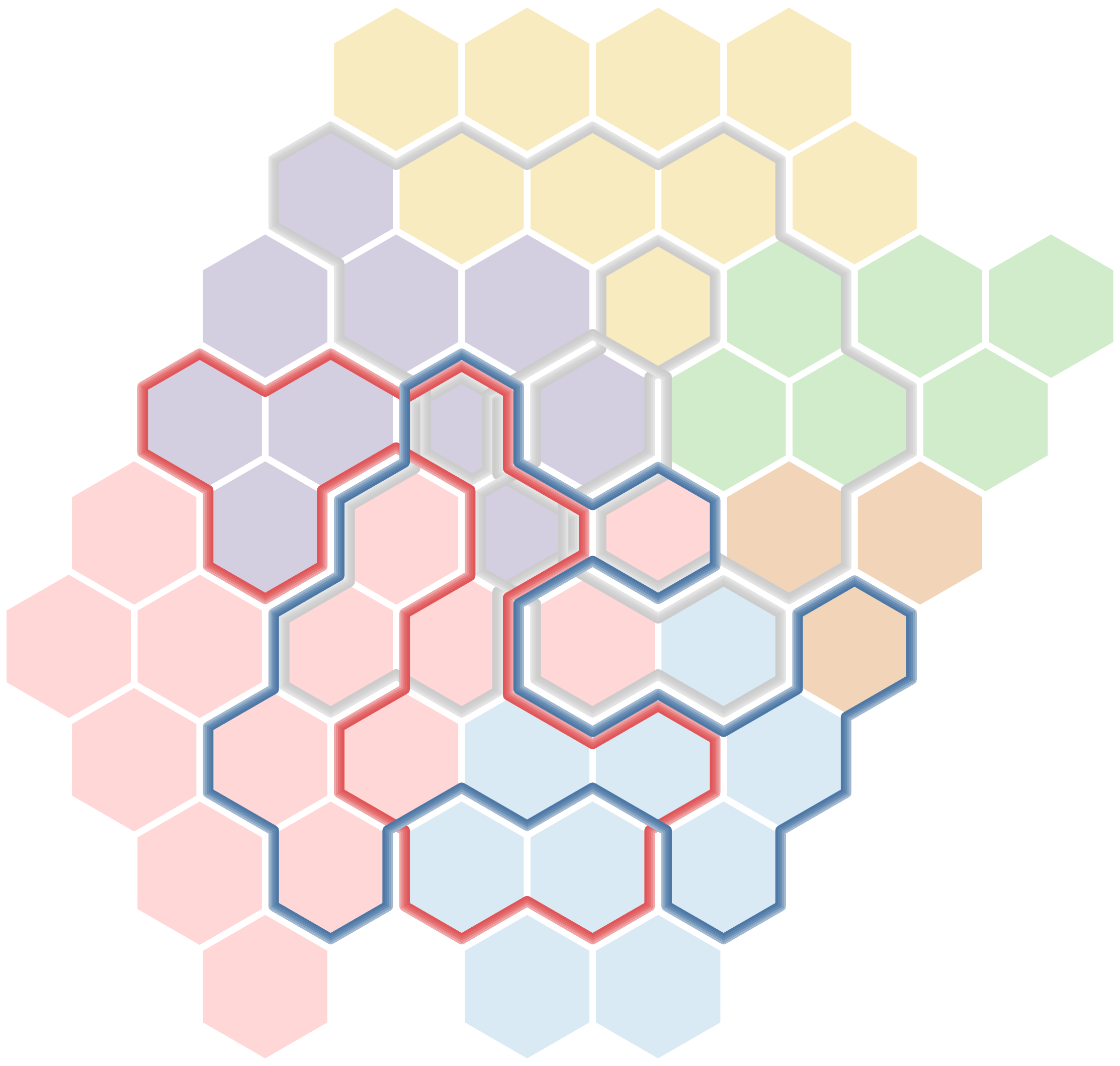}}
\begin{figure}[t]
\centering
\begin{subfigure}[c]{0.4\linewidth}
        \centering
        \vbox to \ht\mybox{%
        \vfill
        \includegraphics[scale=0.18]{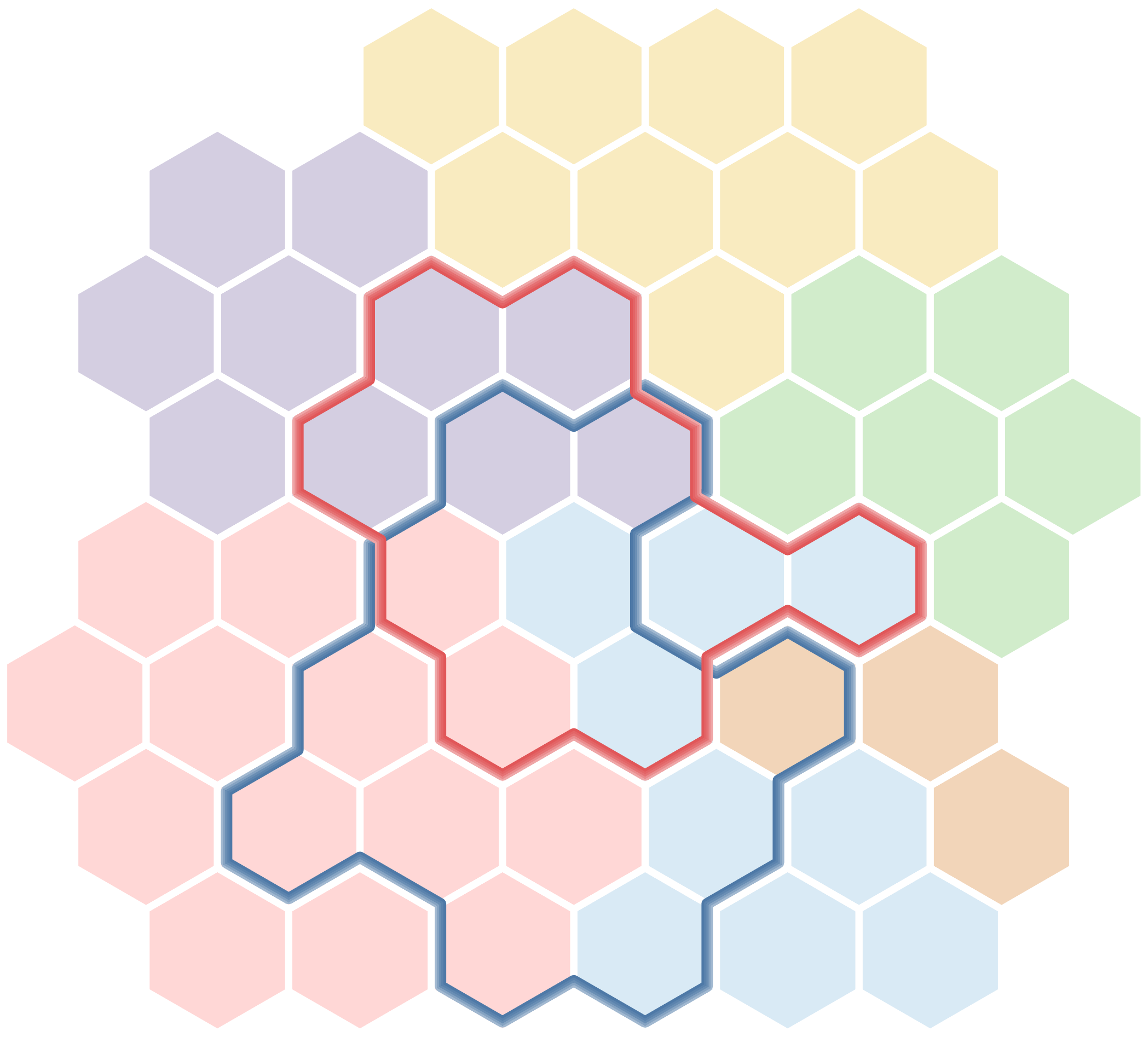}
        \vfill
        }
        \caption{\change{two projects}}
        \label{fig:lwf2proj}
\end{subfigure}
\qquad
\begin{subfigure}[c]{0.4\linewidth}
        \centering
        \usebox{\mybox}
        \caption{\change{five projects}}
        \label{fig:lwf5proj}
\end{subfigure}
    \caption{\change{MosaicSets computed with eccentricity-based compactness (\MosaicSetsE{}) for \databonn{} with different numbers of projects. We highlight the same two projects (red and blue) and indicate the others in gray.}}
    \label{fig:manyProjects}
\end{figure}

\begin{figure*}[t]
    \centering
\begin{subfigure}{0.28\linewidth}
    \centering
    \includegraphics[width=\linewidth,page=1]{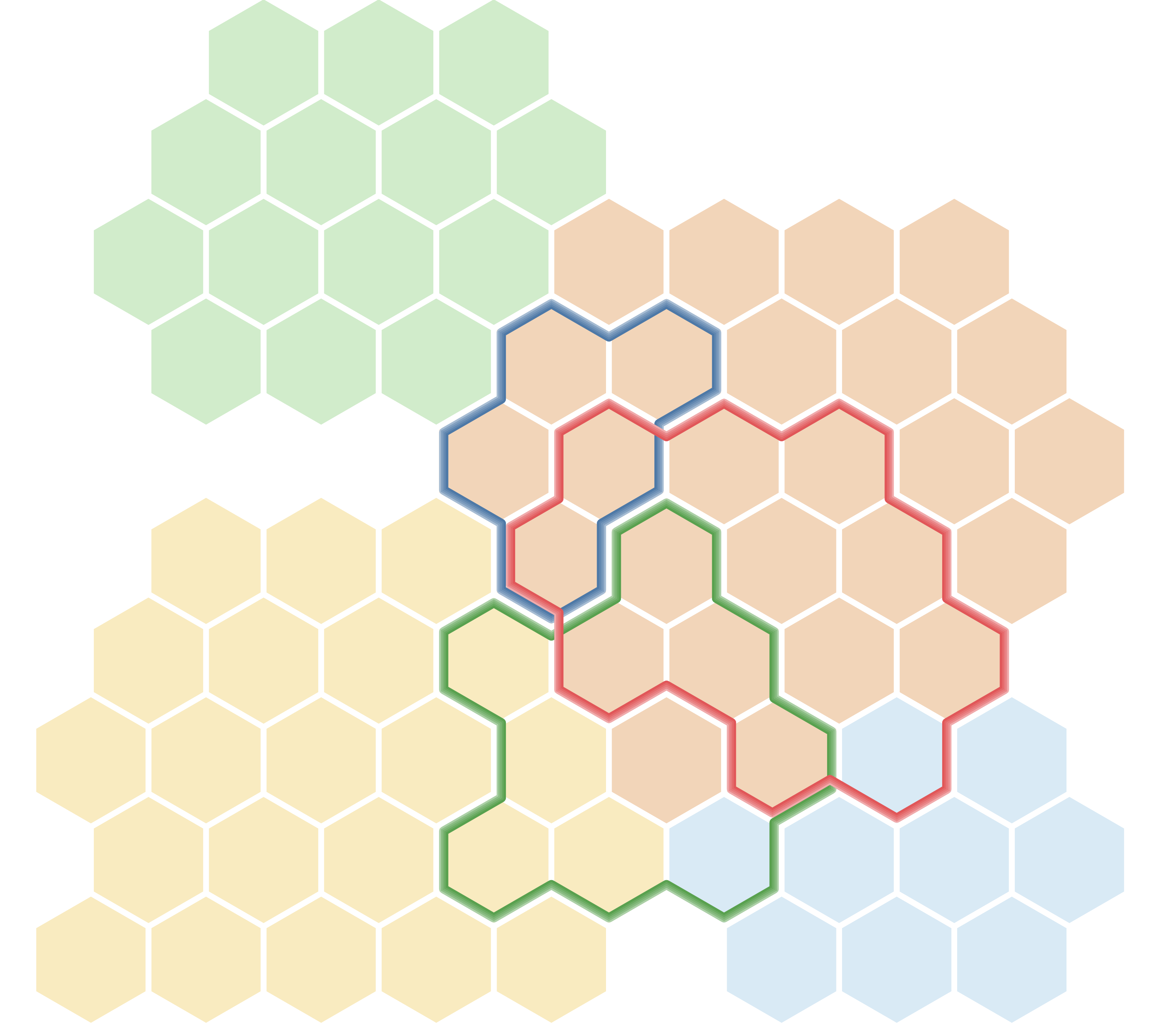}
    \caption{perimeter-based compactness (\MosaicSetsP)}
    \label{fig:perimeter}
\end{subfigure}
\hspace{2em}
\begin{subfigure}{0.28\linewidth}
    \centering
    \includegraphics[width=\linewidth,page=2]{tuw_fig7}
    \caption{eccentricity-based compactness (\MosaicSetsE{})}
    \label{fig:eccentricities}
\end{subfigure}
\hspace{2em}
\begin{subfigure}{0.28\linewidth}
    \centering
    \includegraphics[width=\linewidth,page=3]{figures/tuw_fig7.pdf}
    \caption{as (\subref{fig:eccentricities}) but area fixed after first iteration (\MosaicSetsEArea{})}
    \label{fig:instOnly}
\end{subfigure}
\caption{Comparison of the different compactness approaches for \datavienna{} \change{using three overlay sets}.}
\label{fig:comparison}
\end{figure*}

\subsection{Number of Overlay Sets}\label{sec:eval-numOverlays}

From the expert interviews (Sect.~\ref{sec:eval-expertinterview}), we learned that overlaying more than three sets (e.g., research projects) severely limits visual clarity. In principle, this problem can be counteracted by using small multiples or interactive maps. For both strategies it is necessary to first generate a visualization containing all sets of interest. 
\change{Figure~\ref{fig:manyProjects} shows \databonn{} optimized once with two and once with five overlay sets but both times only two projects are visualized. For an interactive version of Figure~\ref{fig:lwf5proj} that allows to display all five overlays see link in Sect.~1}
When applying \MosaicSetsE{}, the number of overlay sets strongly affects the running time, \change{e.g., from $0.8$\,s to $40.8$\,s in Fig.~\ref{fig:manyProjects}. Computing a solution with four overlay sets took $2.8$\,s.}
Moreover, we observe that as the number of considered overlay sets in \MosaicSetsE{} increases, the compactness of the sets decreases.
\change{While the base map in Fig.~\ref{fig:manyProjects} is comparably compact in both visualizations, the overlay sets are less compact with five overlays in Fig.~\ref{fig:lwf5proj}. For the two highlighted projects, the Polsby-Popper score decreases from $0.486$ to $0.234$ and from $0.500$ to $0.236$, respectively.} 

\change{We recommend using small multiples or interactive maps for a large number of overlay sets. Otherwise, in case of static maps, using a large number of overlay sets results in overloaded and cluttered maps. For the further experiments we focus on static maps and hence limit to three overlay sets which complies with the experts' recommendation.}

\subsection{Running Time}
\change{To investigate the influence of the two different compactness formulations (see Section~\ref{ssec:compactness}) on the running time, we compare the running times of the variants \MosaicSetsE{}, \MosaicSetsP{} and \MosaicSetsEArea{}. We evaluate them for \datavienna{} with three projects and a hexagonal grid. While both \MosaicSetsE{} and \MosaicSetsEArea{} were solved in less than one second, we stopped the computation of \MosaicSetsP{} after one hour with an optimality gap of \change{$15.69\%$}. Thus, we argue that in terms of a reasonable running time, compactness based on eccentricities is a better choice. Nevertheless, we suppose that \MosaicSetsP{} is well suited to assess the compactness of \MosaicSetsE{} and \MosaicSetsEArea{}.}

\change{Next we focus in more detail on the running times of both versions implementing compactness based on eccentricities: \MosaicSetsE{} and \MosaicSetsEArea{}. Across all three data sets, the first iteration of \MosaicSetsE{} needed a maximum of $0.7$\,s. We observe that the running time of all subsequent iterations is substantially faster. Terminating after a total of five iterations, the combined computation time of the four subsequent iterations is of the same order of magnitude as for the first iteration. Overall, the computation time for a hexagonal grid is $0.7$\,--\,$1.6$\,s and for a square grid $0.6$\,--\,$1.6$\,s.} %
\change{Considering \MosaicSetsEArea{}, we do not observe a substantial difference in the running time compared to \MosaicSetsE{} for \databonn{} and \datavienna{}. However, solving the largest of our data sets, i.e., \datanationalrat{}, it is about $10\%$ slower. This holds for both the hexagonal and square grid.}

\subsection{Assessing the Compactness}\label{sec:eval-compactness}
When assessing the compactness of our visualizations, we deem three objectives to be desirable: (C1) the overall base map is compact, (C2) each individual set is compact, and (C3) only base map sets (e.g., departments) are compact. To quantify the compactness, we use the Polsby-Popper score which we denote by $PP_{\textrm{C1}}$ for (C1). For (C2) and (C3) we compute the mean Polsby-Popper score over the considered sets which we denote by $PP_{\textrm{C2}}$ and $PP_{\textrm{C3}}$, respectively. 

In Fig.~\ref{fig:comparison} we show the results for \MosaicSetsP{}, \MosaicSetsE{}, and \MosaicSetsEArea{} for \datavienna{} and a hexagonal grid.
We observe that for \MosaicSetsP{} the overall base map (C1) is less compact compared to the ones achieved by \MosaicSetsE{} and \MosaicSetsEArea{} which is reflected in $PP_{\textrm{C1}} = 0.381$, $PP_{\textrm{C1}} = \change{0.532}$ and $PP_{\textrm{C1}} = \change{0.603}$, respectively. \change{This can also be observed when considering Fig.~\ref{fig:comparison}, where \MosaicSetsEArea{} results in the most compact overall base map.}
Considering criterion (C2), we achieve a better result for \MosaicSetsP{} ($PP_{\textrm{C2}} = \change{0.618}$) compared to \MosaicSetsE{} ($PP_{\textrm{C2}} = \change{0.574}$) and \MosaicSetsEArea{} ($PP_{\textrm{C2}} = \change{0.576}$).
For criterion (C3), we achieve the best result for \MosaicSetsP{} ($PP_{\textrm{C3}} = \change{0.640}$). With \MosaicSetsE{} ($PP_{\textrm{C3}} = \change{0.586}$) we achieve a slightly lower value regarding (C3). However, fixing the area of the base map after the first iteration with \MosaicSetsEArea{}, we obtain the lowest compactness of these sets ($PP_{\textrm{C3}} = \change{0.569}$).
\change{This reduced compactness was also observed by the experts when they looked at the corresponding visualizations.}
\change{Overall, we argue that \MosaicSetsE{} provides a compromise between all criteria while still being applicable in practice considering the running time. With respect to the experts' opinions (see Sect.~\ref{sec:eval-expertinterview}), we suggest \MosaicSetsE{}, since both the compactness of the entire base map (C1) and of the sets building the base map (C3) were considered particularly important.}

\section{Conclusions and Future Research}
We have presented an approach for the visualization of set systems on a regular grid. We require that all sets form contiguous regions and look at two different models for the compactness. In particular, we have a measure based on the perimeter of the region and one based on eccentricities. As the underlying problem is shown to be NP-hard, we presented ILP formulations of the two variants of the problem. 

To evaluate MosaicSets, we interviewed experts about visual criteria and applicability in a real-world scenario. The experts pointed out that MosaicSets is visually as good as a manually generated illustration, while saving several days of labour. Further, we performed experiments with three real-world data sets. In the experiments, we compared the perimeter and eccentricities variants with respect to the compactness; it shows that both have their strengths and weaknesses but none outplays the other. On the other hand, when comparing the running times it shows that MosaicSets with perimeter compactness runs in more than an hour while MosaicSets with eccentricities compactness can be computed within few seconds. Hence, we recommend using eccentricities compactness and consider this variant to be applicable in practice.

Still, we see a large potential for future research on MosaicSets. 
For example, the experts pointed out that combining MosaicSets with other information visualization techniques (e.g., tables) and interaction techniques (e.g., highlighting of cells) \change{can provide a more refined and powerful visualization system.}
Considering the rendering, we could include the optimization of colors (similar as in GMap~\cite{DBLP:conf/gd/GansnerHK09,DBLP:conf/recsys/GansnerHKV09}) and \change{optimize routing of region boundaries. Further, a comprehensive comparison and support of different rendering styles such as KelpFusion~\cite{meulemans2013kelpfusion} could give users more flexibility for different use cases.} It would be beneficial to develop a user interface for the generation of MosaicSets that includes interactive model manipulation. For example, the user should be able to introduce use-case-specific constraints such as the constraint asking to place each of six departments adjacent to a central empty hexagonal grid cell \change{or specify other global or local shape constraints}.  %
Also dynamically evolving and temporal set systems lead to open research questions, e.g., when considering consistency criteria. Finally, our work brings up interesting new questions for the algorithms community: In which cases (i.e., for which classes of hypergraphs and graphs) can we compute a hypergraph support in a given host graph in polynomial time? Are there efficient approximation algorithms for practically relevant versions? It would be interesting to compare such algorithms with our ILP-based approach.

\acknowledgments{We thank the experts from our study Michael Kneuper and Dr. Susanne Plattes (University of Bonn, Germany), and Lara Orth (die Informationsdesigner, Cologne, Germany). Partially funded by the Deutsche Forschungsgemeinschaft (DFG, German Research Foundation) under Germany’s Excellence Strategy – EXC 2070 – 390732324 and by the Vienna Science and Technology Fund (WWTF) grant ICT19-035.}

\bibliographystyle{abbrv-doi-hyperref}

\bibliography{references}
\end{document}


%
%

%
\firstsection{}
\setcounter{figure}{7}
\maketitle
\appendix

\newcommand{\databonn}{\textsc{Bonn}}
\newcommand{\datavienna}{\textsc{Vienna}}
\newcommand{\datanationalrat}{\textsc{Parliament}}

\newcommand{\MosaicSetsE}{\textsc{MSE}}
\newcommand{\MosaicSetsEA}{\textsc{MSEA}}
\newcommand{\change}[1]{\textcolor{black}{#1}}
\newcommand{\pp}{PP_{C_2}}

\change{
In this document, we provide supplemental material that is not included in the main manuscript due to space limitations. We provide additional visualizations in Sect.~\ref{apx:figs}, additional information on the tasks completed by the experts in Sect.~\ref{apx:tasks}, and the manuscript of our expert interviews in Sect.~\ref{apx:script}. 
}
\section{Additional Figures}\label{apx:figs}
\change{
\paragraph{Figures for Section~4.3}
Figure~\ref{fig:relaxation} shows the proposed relaxation of the contiguity constraint which is explained in Sect.n~4.3. In detail, we do not enforce the contiguity of the overlay sets anymore. We use \datanationalrat{} for the comparison where we include eight overlay sets. When trying to solve this with the eccentricity-based approach \MosaicSetsE{}, we terminated the experiment after ten minutes. Until then \MosaicSetsE{} did not find a valid solution. Relaxing the contiguity constraint led to a solution within few seconds. The caption gives detailed running times.}
\change{
\paragraph{Figures for Section 5}
In Fig.~\ref{fig:kelp_style}, we illustrate examples for MosaicSets using the Kelp rendering style. We computed MosaicSets with \MosaicSetsE{}. The corresponding boundary style visualizations are given in this document in Fig.~\ref{fig:V_hx}a and  Fig.~\ref{fig:parliament_sq}a, respectively.}
\change{
\paragraph{Figures for Section 6.5 and Section 6.6.}
In Fig.~\ref{fig:bn_hx} to Fig.~\ref{fig:parliament_sq}, we compare visualizations of MosaicSets computed with \MosaicSetsE{} and \MosaicSetsEA{}. We recall that \MosaicSetsE{} uses an eccentricity-based compactness measure in all iterations. \MosaicSetsEA{} is a variant of \MosaicSetsE{} that restricts the available grid cells to those used in the first iteration and therefore confines the available grid area. In the captions of the figures, we also provide measures on compactness (mean Polsby-Popper score) and the running times. We show such a comparison for each data set with a hexagonal grid and afterwards with a square grid.}

\change{\databonn{}} is shown in Fig.~\ref{fig:bn_hx} and Fig.~\ref{fig:bn_sq}. The data set consists of \change{51} research groups,  \change{six departments} and three research projects. We use the departments as sets of the base map and visualize the research projects as overlays. 

\change{\datavienna{}} is shown in Fig.~\ref{fig:V_hx} and Fig.~\ref{fig:V_sq}. The data set consists of 71 research groups belonging to four different institutes and we have three research projects. Again the institutes are used as sets for the base map and the research projects are visualized as overlays.

\change{\datanationalrat{} is shown in Fig.~\ref{fig:parliament_hx} and Fig.~\ref{fig:parliament_sq}. The base map represents all 178 members of the Austrian parliament colored by their affiliation to one of the five political parties. The coloring of the base map uses lighter shades of the different parties' official colors. The overlay shows interest groups. Such a map could aid in transparency by understanding if a vote was cast to serve the interest of groups, networks, or individuals.}

\section{Task Taxonomy}\label{apx:tasks}
\change{Alsallakh et al.~\cite{DBLP:journals/cgf/AlsallakhMAHMR16} introduced a taxonomy of tasks commonly associated with set visualization techniques. Tasks are classified into three broad categories:}

\begin{compactenum}
    \item Tasks in Group \textbf{A} are element-based tasks that are concerned with elements and their respective relationship to the sets. For example: In which research projects is the research group 'Data Science in Agricultural Economics' involved in?
    \item Tasks in Group \textbf{B} are related to sets and the relationship between different sets without taking individual elements into account. For example: Which research projects do overlap with project 'PhenoRob'?
    \item Tasks in Group \textbf{C} are related to element attributes and consider attributes of set elements and their relationship of distribution in regards to set membership. For example: Do research groups in the 'PhenoRob' project publish more papers than research groups in the 'DETECT' project? 
\end{compactenum}

MosaicSets only supports tasks in Groups A and B as we do not represent element attributes. Table~\ref{tab:taxonomy} shows all tasks of the two categories A and B. We list the assessment from our experts whether MosaicSets supports a task fully, partially or not at all. In order to compare MosaicSets to similar set visualization techniques, we also include Euler diagrams and frequency grids in Table~\ref{tab:taxonomy}.

\section{Expert Interviews}\label{apx:script}
\change{We also append the manuscript used for the expert interviews.   We performed one preliminary interview in March 2022 with two experts. We only consider their opinions on the rendering style from this interview. Later, in June 2022, we repeated an updated version of the interview with the two experts from the preliminary interview and an additional expert. The provided manuscript is the one we used for the second interview phase, but for completeness we also added the questions on the rendering from the first phase.}

\begin{figure*}
    \centering
    \begin{subfigure}{0.45\textwidth}
            \includegraphics[width=\textwidth]{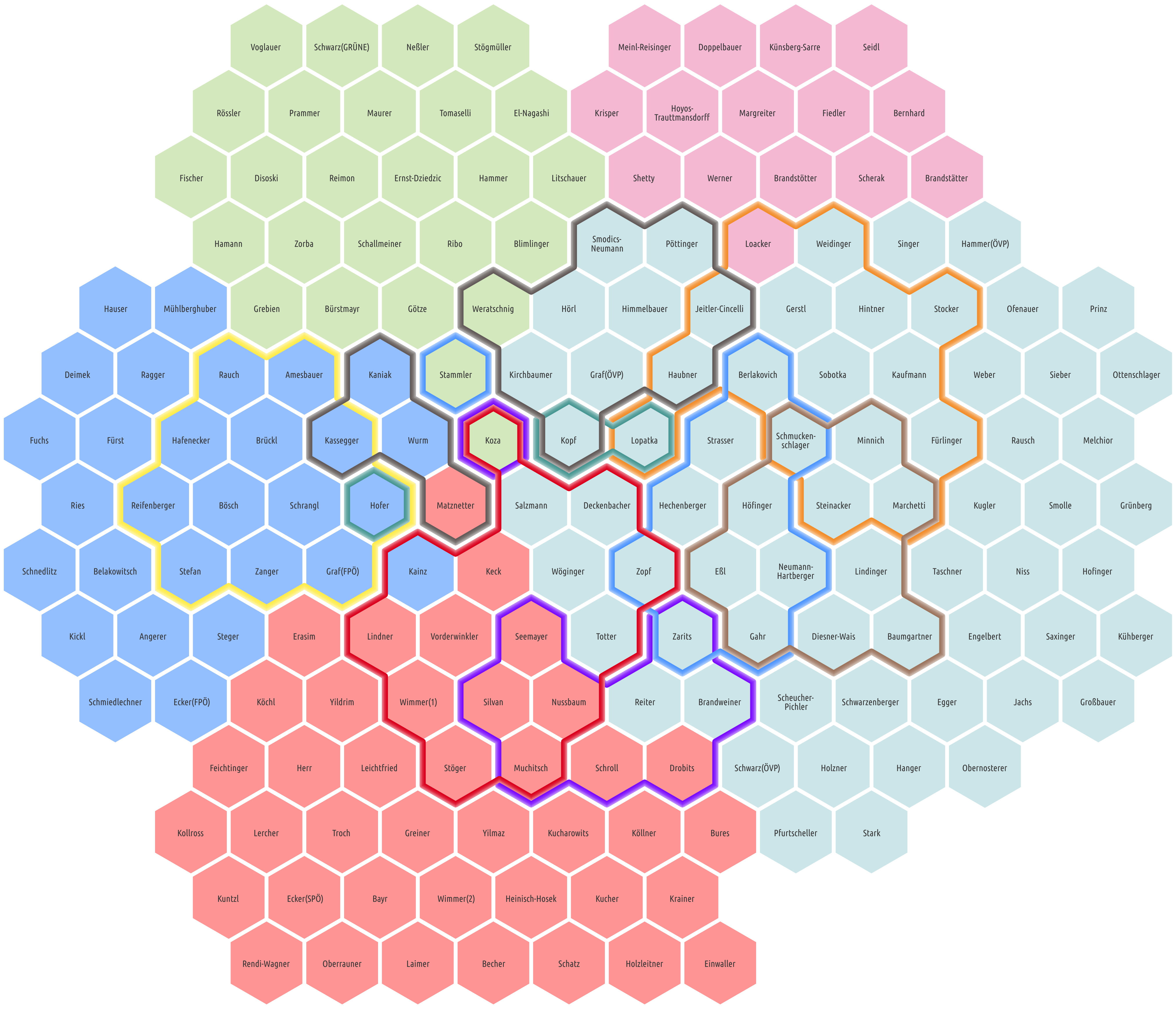}
            \caption{\datanationalrat{} with hexagonal grid}
    \end{subfigure}
    \hfill
    \begin{subfigure}{0.45\textwidth}
            \includegraphics[width=\textwidth]{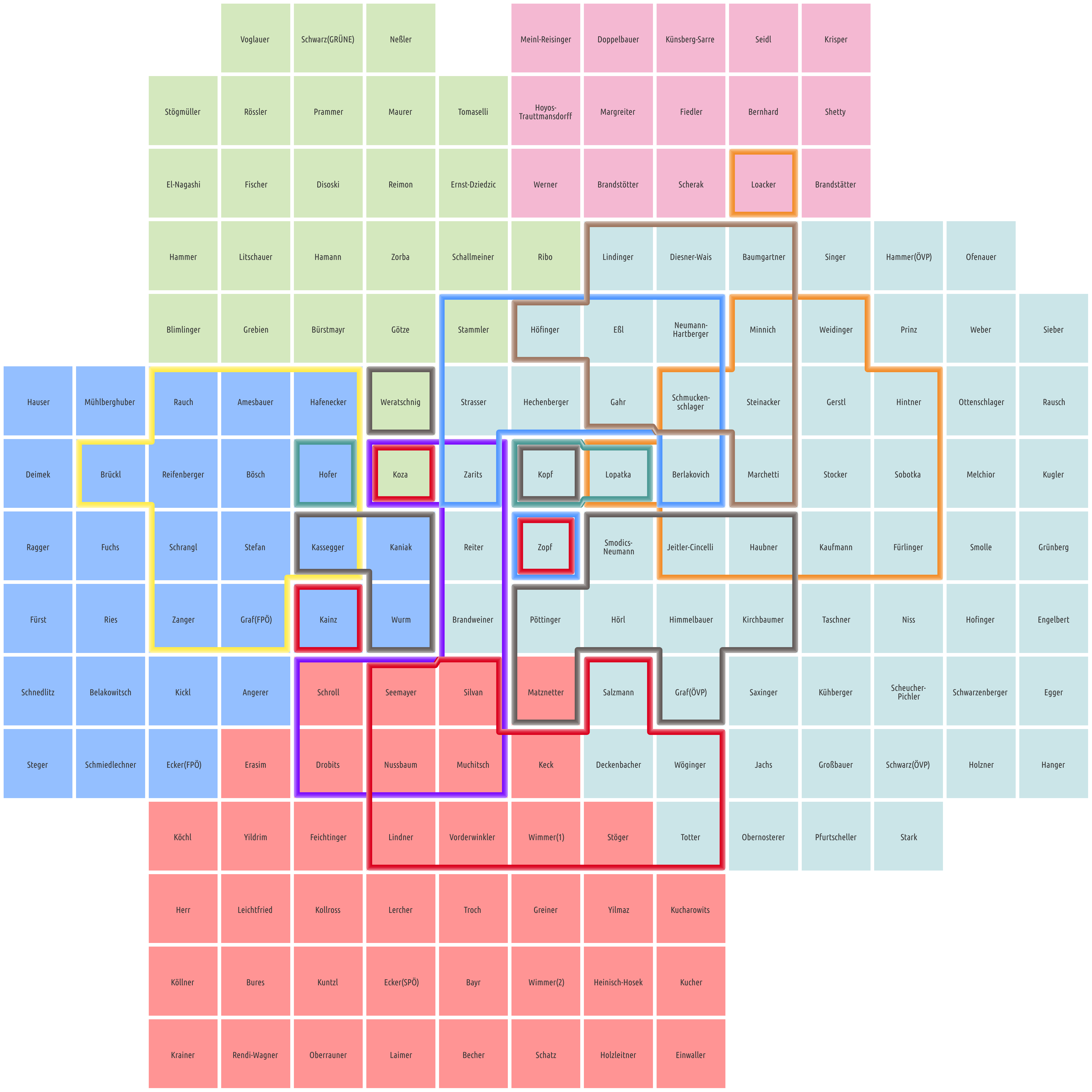}
            \caption{\datanationalrat{} with square grid}
    \end{subfigure}
    \caption{\change{\datanationalrat{} with eight overlay sets visualized using \MosaicSetsE{} and relaxing the contiguity constraint as described in Section 4.3. The solution shown in (a) has a $\pp{}=0.477$ and a computing time of $2.8$\,s. Three of the eight overlay sets are not contiguous (dark green, purple and blue). The solution shown in (b) has a $\pp{}=0.471$ and a computing time of $2.2$\,s. Five of the eight overlay sets are not contiguous (dark green, blue, red, orange and gray). \datanationalrat{} uses a color scheme that is a lighter variant of the typical political party colors used in Austria.}}
    \label{fig:relaxation}
\end{figure*}

\begin{figure*}[!htbp]
    \centering
    \begin{subfigure}{0.45\textwidth}
            \includegraphics[width=\linewidth]{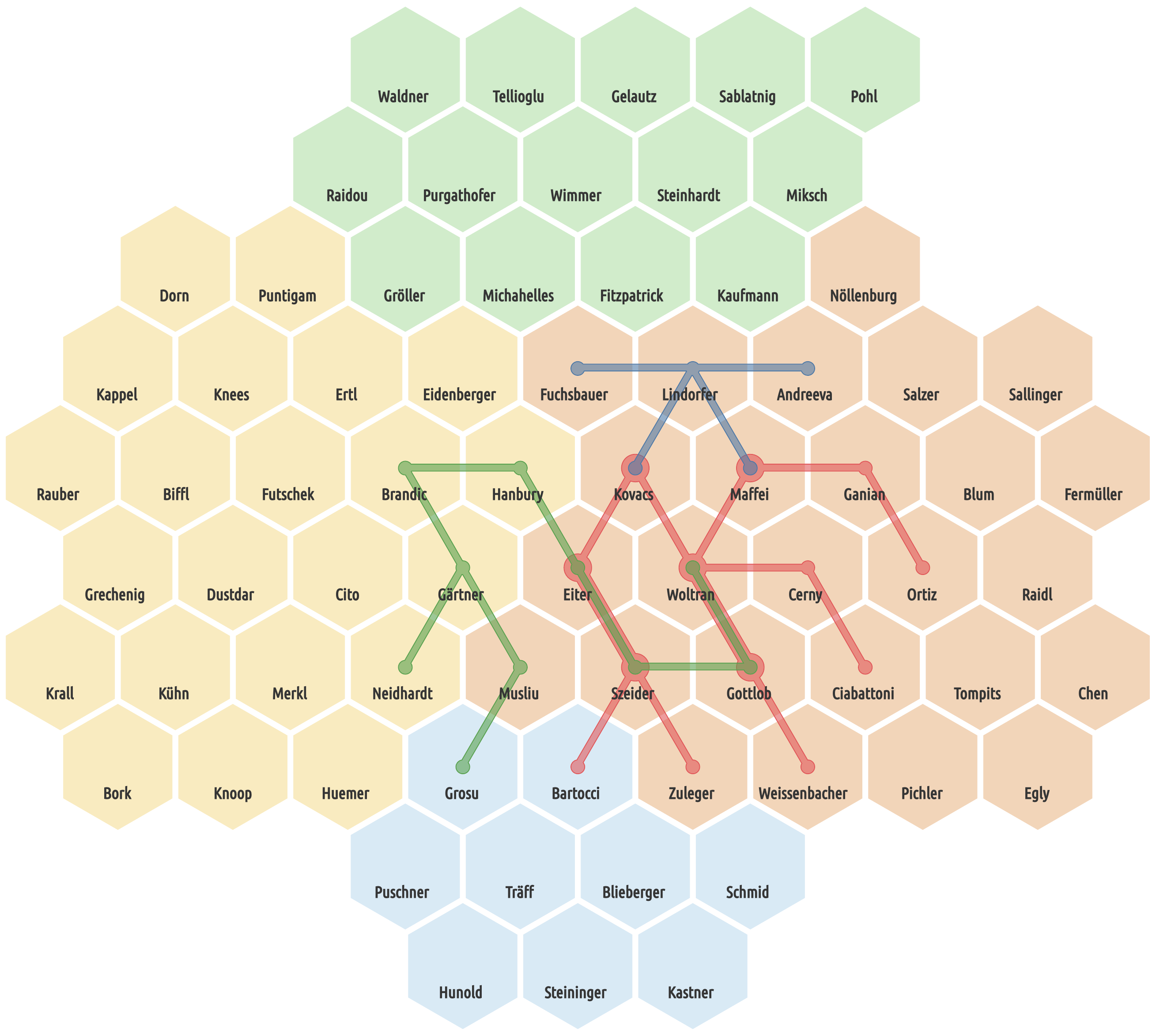}
            \caption{\datavienna{}}
    \end{subfigure}
    \hfill
    \begin{subfigure}{0.45\textwidth}
            \includegraphics[width=\linewidth]{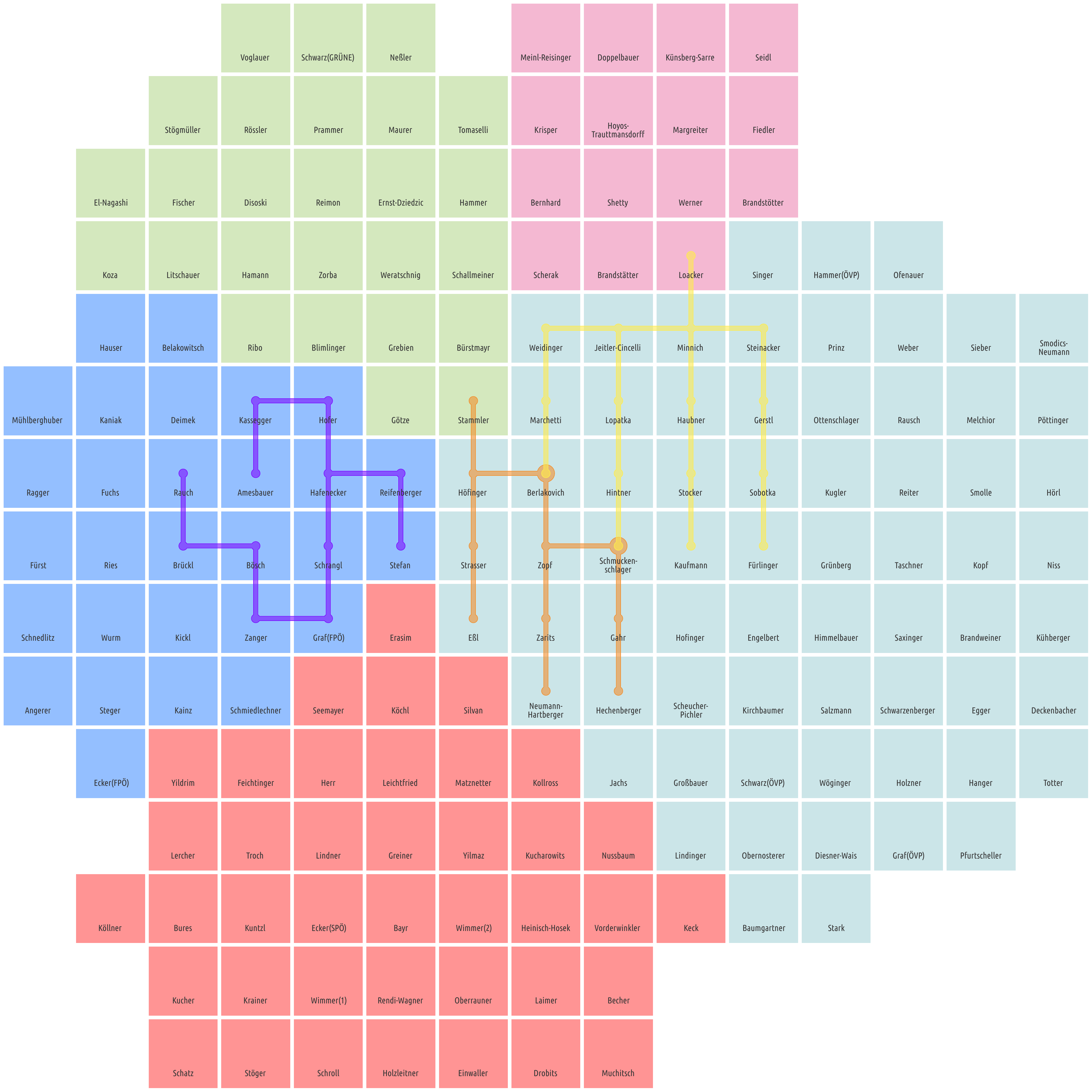}
            \caption{\datanationalrat{}}
    \end{subfigure}
    \caption{\datavienna{} (a) and \datanationalrat{} (b) with three overlay sets visualized with the Kelp style. \change{The corresponding visualizations with boundary style are given in Fig.~\ref{fig:V_sq}a and  Fig.~\ref{fig:parliament_sq}a, respectively. See also Fig.~\ref{fig:V_sq}a and  Fig.~\ref{fig:parliament_sq}a for compactness scores and running time. \datanationalrat{} uses a color scheme that is a lighter variant of the typical political party colors used in Austria.} }
    \label{fig:kelp_style} 
\end{figure*}

\begin{figure*}[!htbp]
    \centering
    \begin{subfigure}{0.45\textwidth}
            \includegraphics[width=\linewidth]{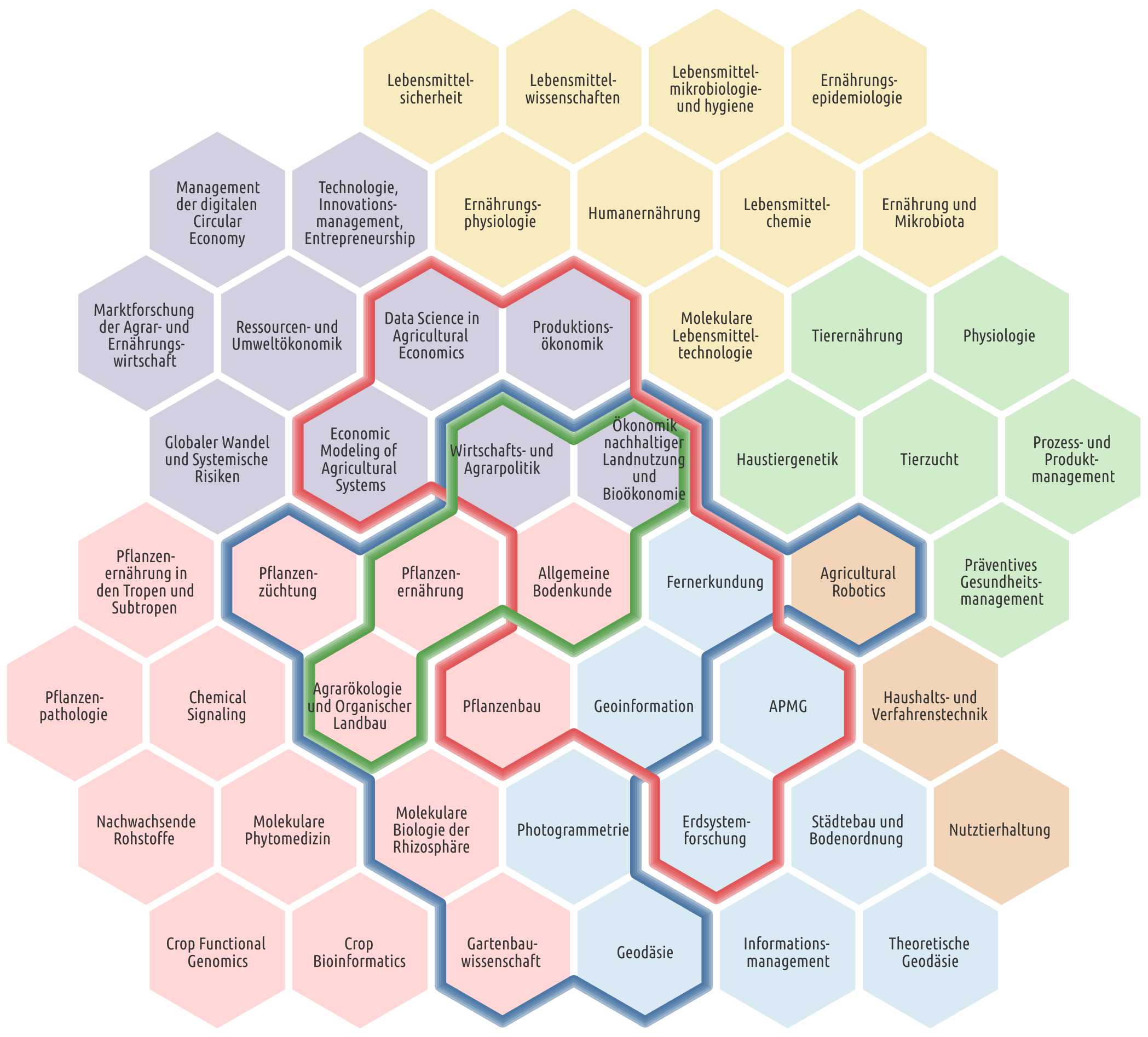}
            \caption{eccentricity-based compactness (\MosaicSetsE{})}
    \end{subfigure}
    \hfill
    \begin{subfigure}{0.45\textwidth}
            \includegraphics[width=\linewidth]{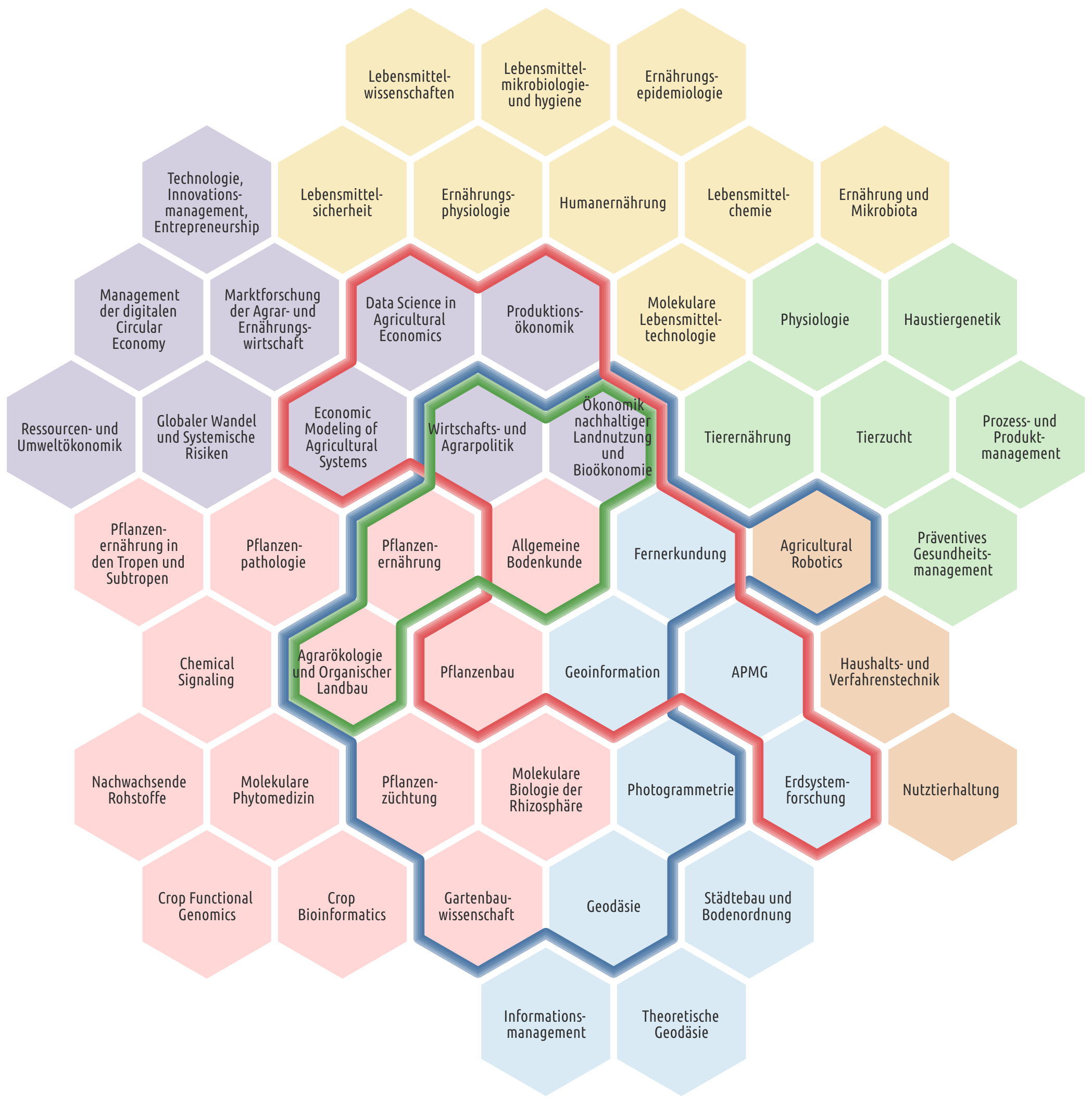}
            \caption{as (a) but area fixed after first iteration (\MosaicSetsEA{})}
    \end{subfigure}
    \caption{MosaicSets visualization for \databonn{} with three overlay sets and a hexagonal grid. (a) \MosaicSetsE{} with \change{$\pp{}=0.521$} and running time of \change{$0.7$\,s}. (b) \MosaicSetsEA{} with \change{$\pp{}=0.491$} and running time \change{$0.7$\,s}.}
    \label{fig:bn_hx} 
\end{figure*}

\begin{figure*}[!htbp]
    \centering
    \begin{subfigure}{0.45\textwidth}
            \includegraphics[width=\linewidth]{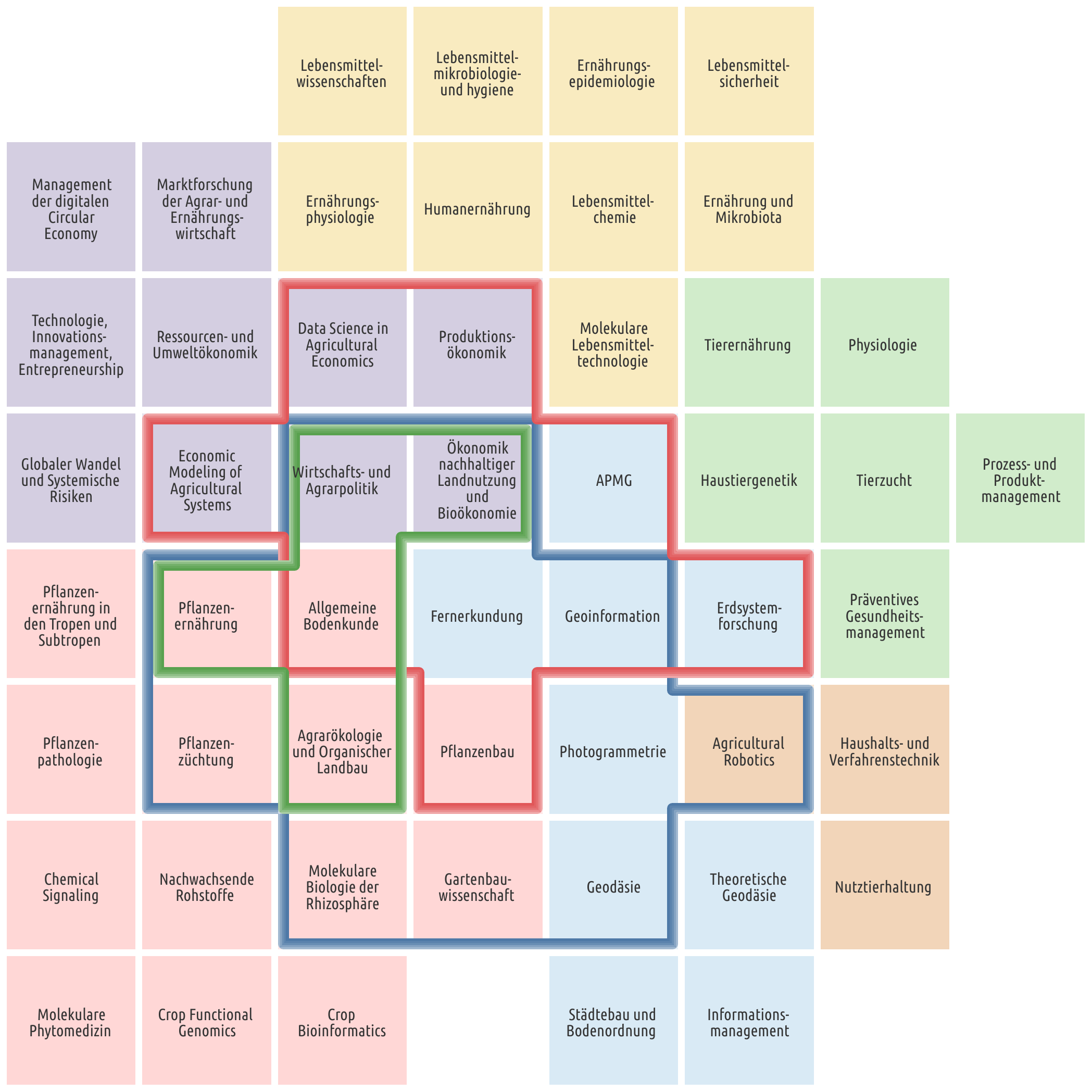}
            \caption{eccentricity-based compactness (\MosaicSetsE{})}
    \end{subfigure}
    \hfill
    \begin{subfigure}{0.45\textwidth}
            \includegraphics[width=\linewidth]{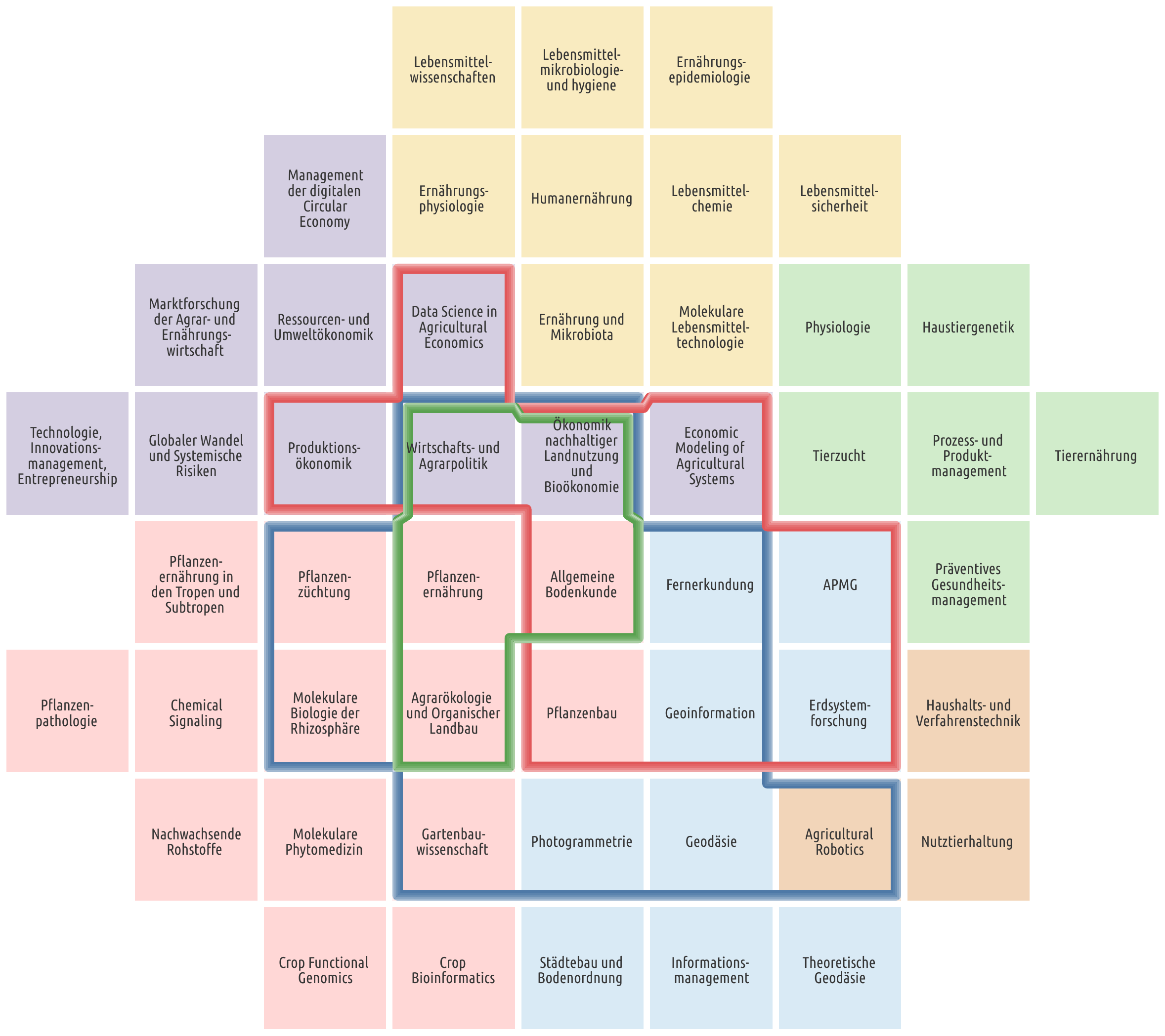}
            \caption{as (a) but area fixed after first iteration (\MosaicSetsEA{})}
    \end{subfigure}
    \caption{
    MosaicSets visualization for \databonn{} with three overlay sets and a square grid. (a) \MosaicSetsE{} with \change{$\pp{}=0.522$} and running time of \change{$0.6$\,s}. (b) \MosaicSetsEA{} with \change{$\pp{}=0.509$} and running time of \change{$0.8$\,s}.}
    \label{fig:bn_sq} 
\end{figure*}

\begin{figure*}[!htbp]
    \centering
    \begin{subfigure}{0.45\textwidth}
            \includegraphics[width=\linewidth]{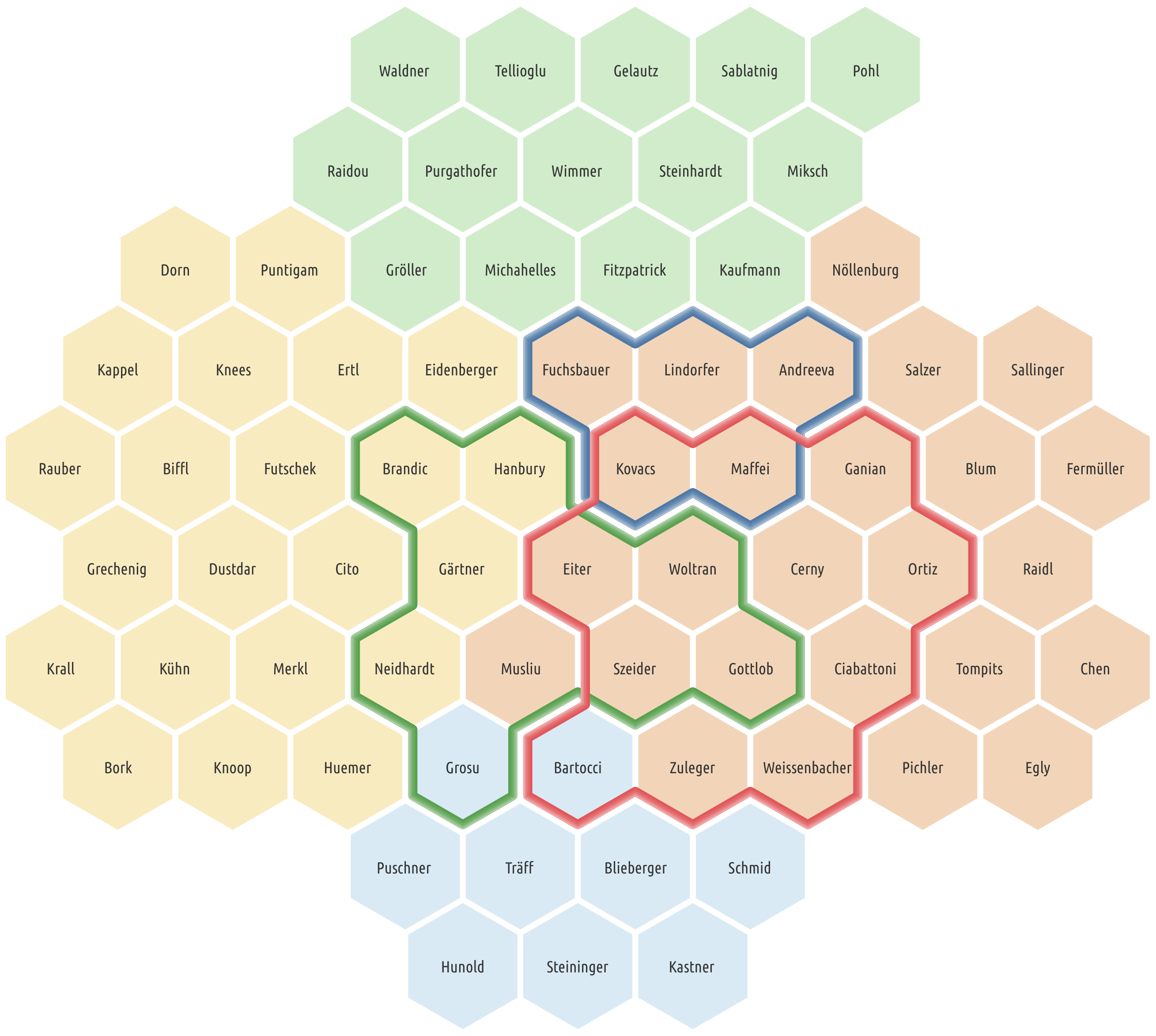}
            \caption{eccentricity-based compactness (\MosaicSetsE{})}
    \end{subfigure}
    \hfill
    \begin{subfigure}{0.45\textwidth}
            \includegraphics[width=\linewidth]{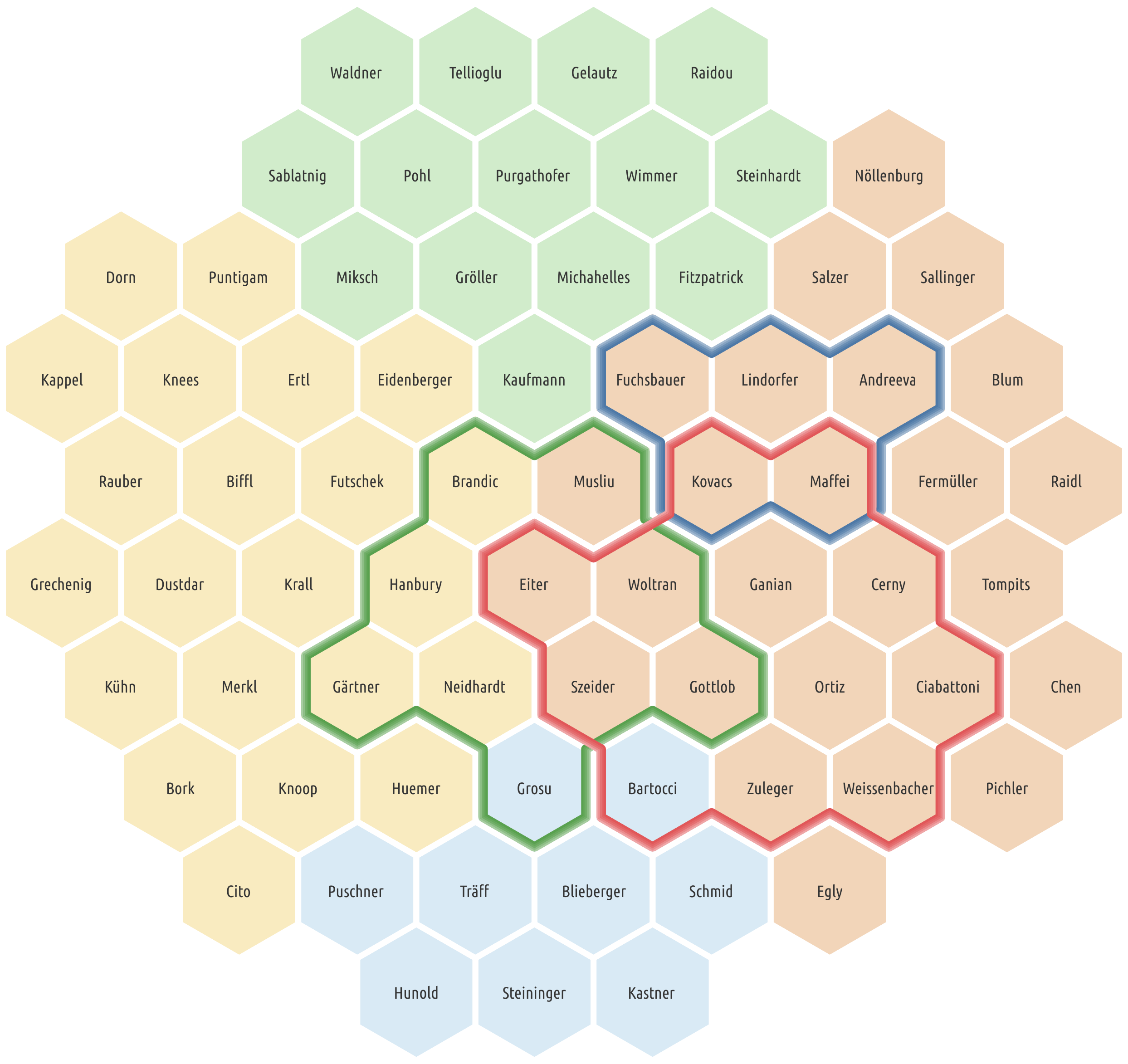}
            \caption{as (a) but area fixed after first iteration (\MosaicSetsEA{})}
    \end{subfigure}
    \caption{MosaicSets visualization for \datavienna{} with three overlay sets and a square grid. (a) \MosaicSetsE{} with \change{$\pp{}=0.574$} and running time of \change{$0.7$\,s}. (b) \MosaicSetsEA{} with \change{$\pp{}=0.576$} and running time of  \change{$0.7$\,s}.}
    \label{fig:V_hx} 
\end{figure*}

\begin{figure*}[!htbp]
    \centering
    \begin{subfigure}{0.45\textwidth}
            \includegraphics[width=\linewidth]{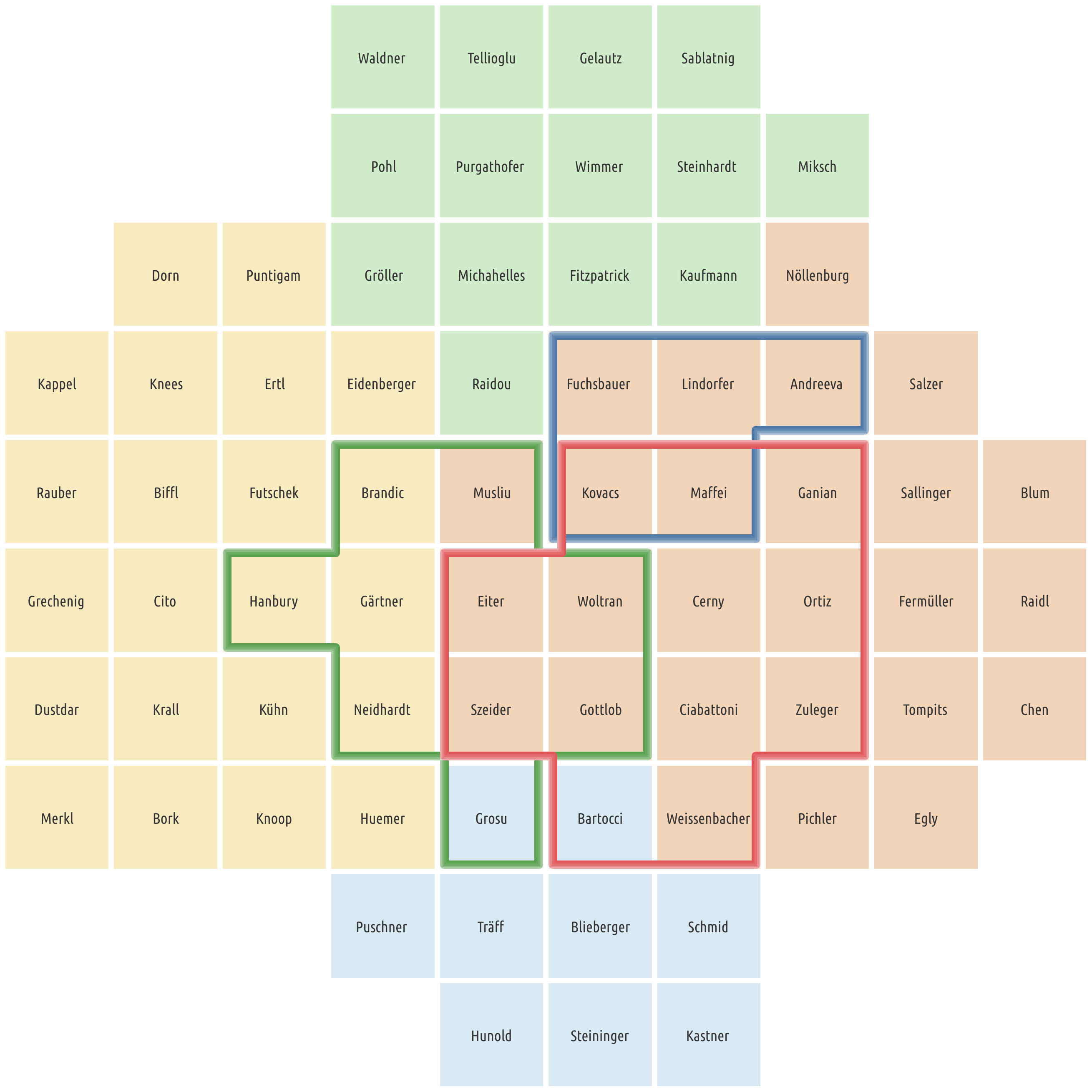}
            \caption{eccentricity-based compactness (\MosaicSetsE{})}
    \end{subfigure}
    \hfill
    \begin{subfigure}{0.45\textwidth}
            \includegraphics[width=\linewidth]{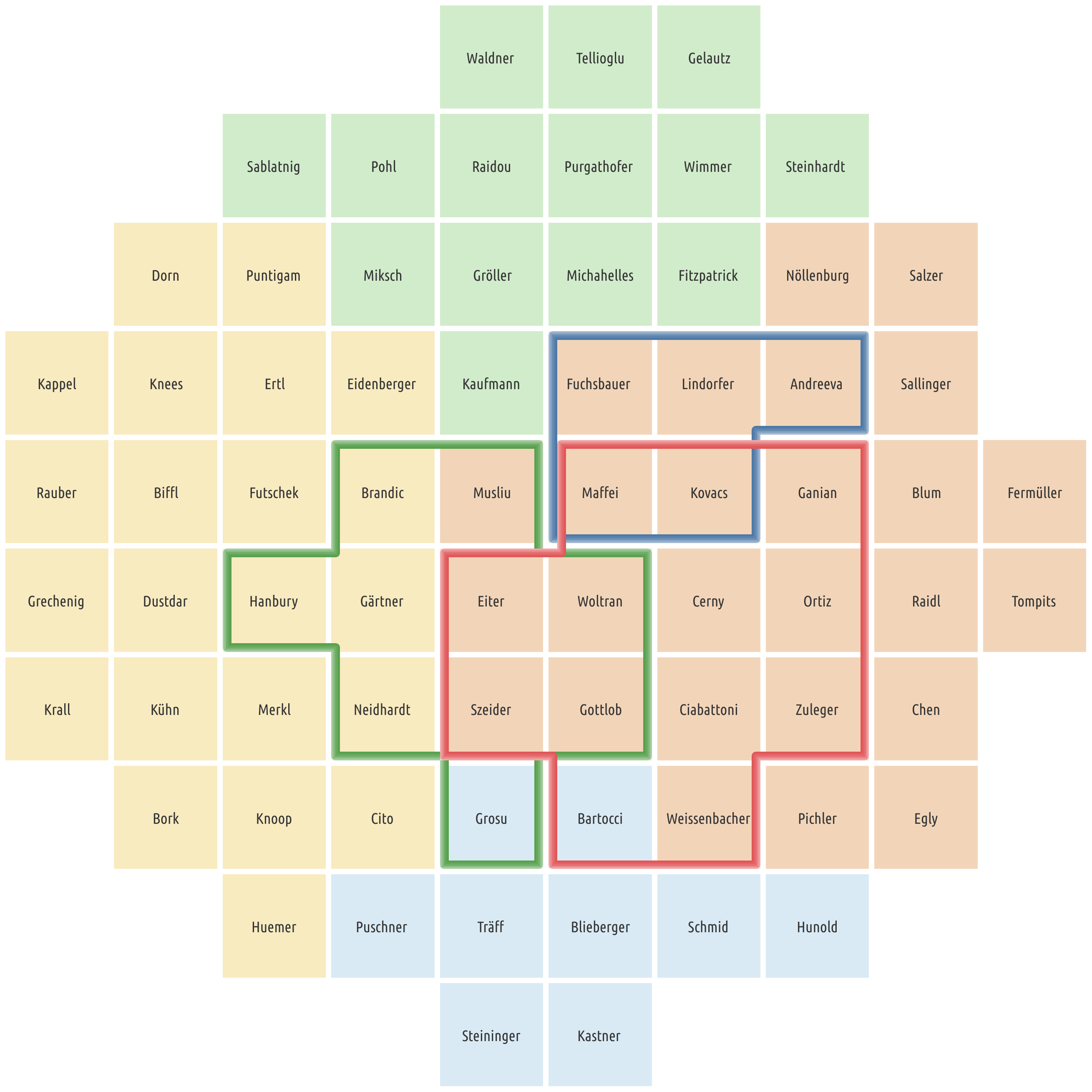}
            \caption{as (a) but area fixed after first iteration (\MosaicSetsEA{})}
    \end{subfigure}
    \caption{\datavienna{} with three overlay sets and square grid visualized with MosaicSets. (a) is \MosaicSetsE{} with \change{$\pp{}=0.584$} and running time of \change{$0.9$\,s}. (b)  \MosaicSetsEA{} with \change{$\pp{}=0.532$} and a running time of \change{$0.8$\,s}.}
    \label{fig:V_sq} 
\end{figure*}

\begin{figure*}[!htbp]
    \centering
    \begin{subfigure}{0.45\textwidth}
            \includegraphics[width=\linewidth]{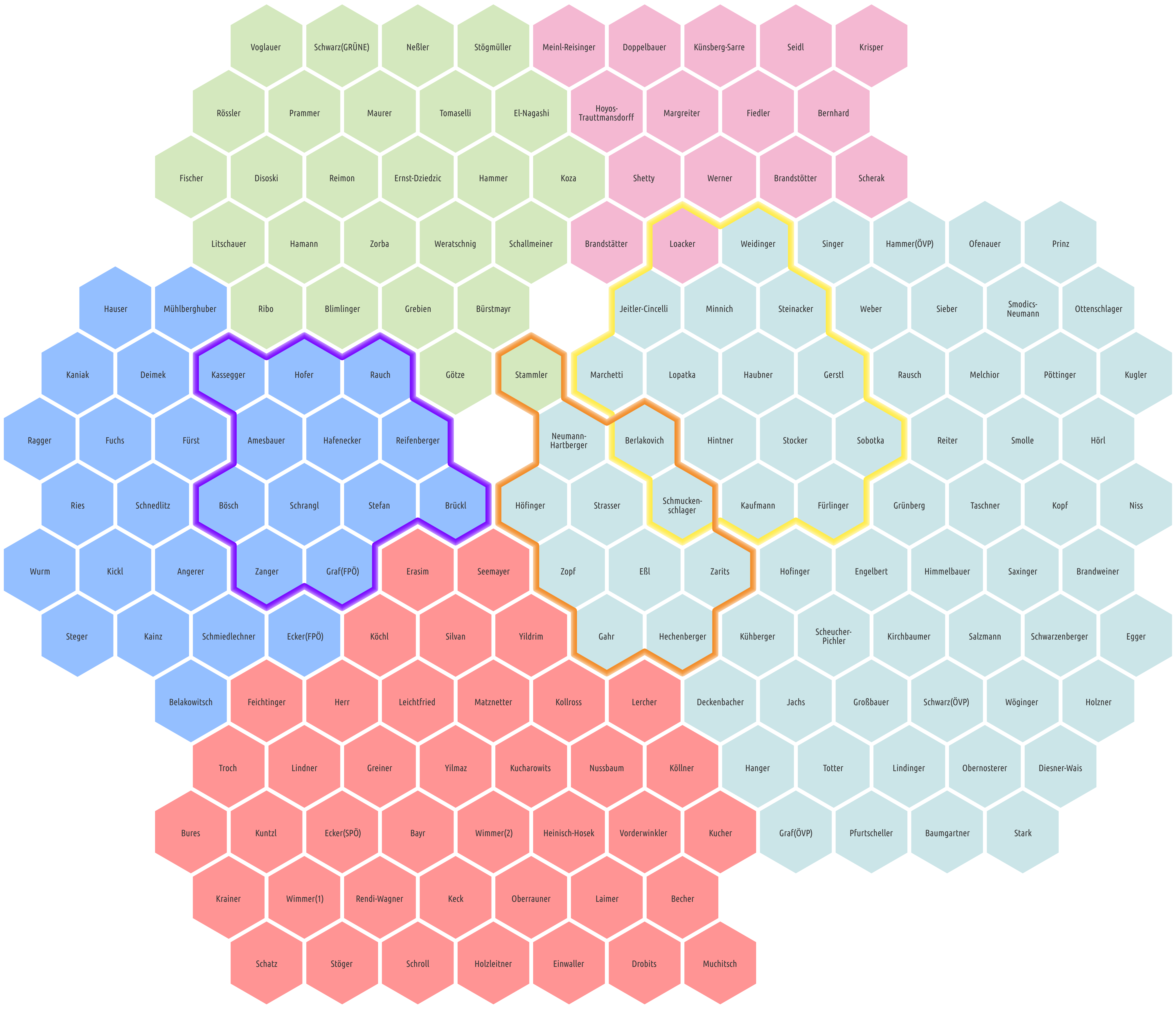}
            \caption{eccentricity-based compactness (\MosaicSetsE{})}
    \end{subfigure}
    \hfill
    \begin{subfigure}{0.45\textwidth}
            \includegraphics[width=\linewidth]{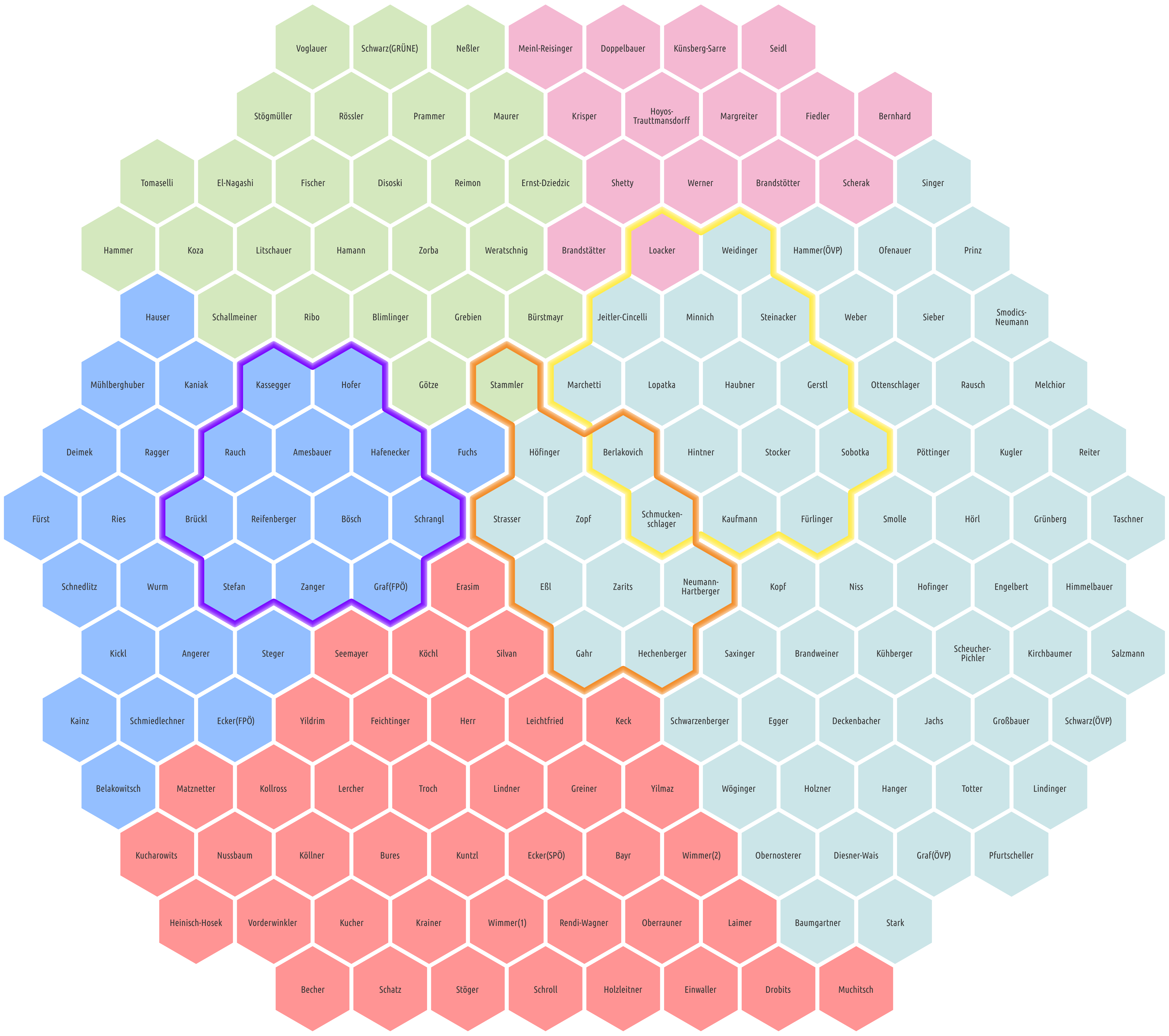}
            \caption{as (a) but area fixed after first iteration (\MosaicSetsEA{})}
    \end{subfigure}
    \caption{\datanationalrat{} with three overlay sets and hexagonal grid visualized with MosaicSets. (a) is \MosaicSetsE{} with \change{$\pp{}=0.563$} and running time of \change{$1.6$\,s}. (b)  \MosaicSetsEA{} with \change{$\pp{}=0.543$} and a running time of \change{$1.8$\,s. \datanationalrat{} uses a color scheme that is a lighter variant of the typical political party colors used in Austria.}}
    \label{fig:parliament_hx} 
\end{figure*}

\begin{figure*}[!htbp]
    \centering
    \begin{subfigure}{0.45\textwidth}
            \includegraphics[width=\linewidth]{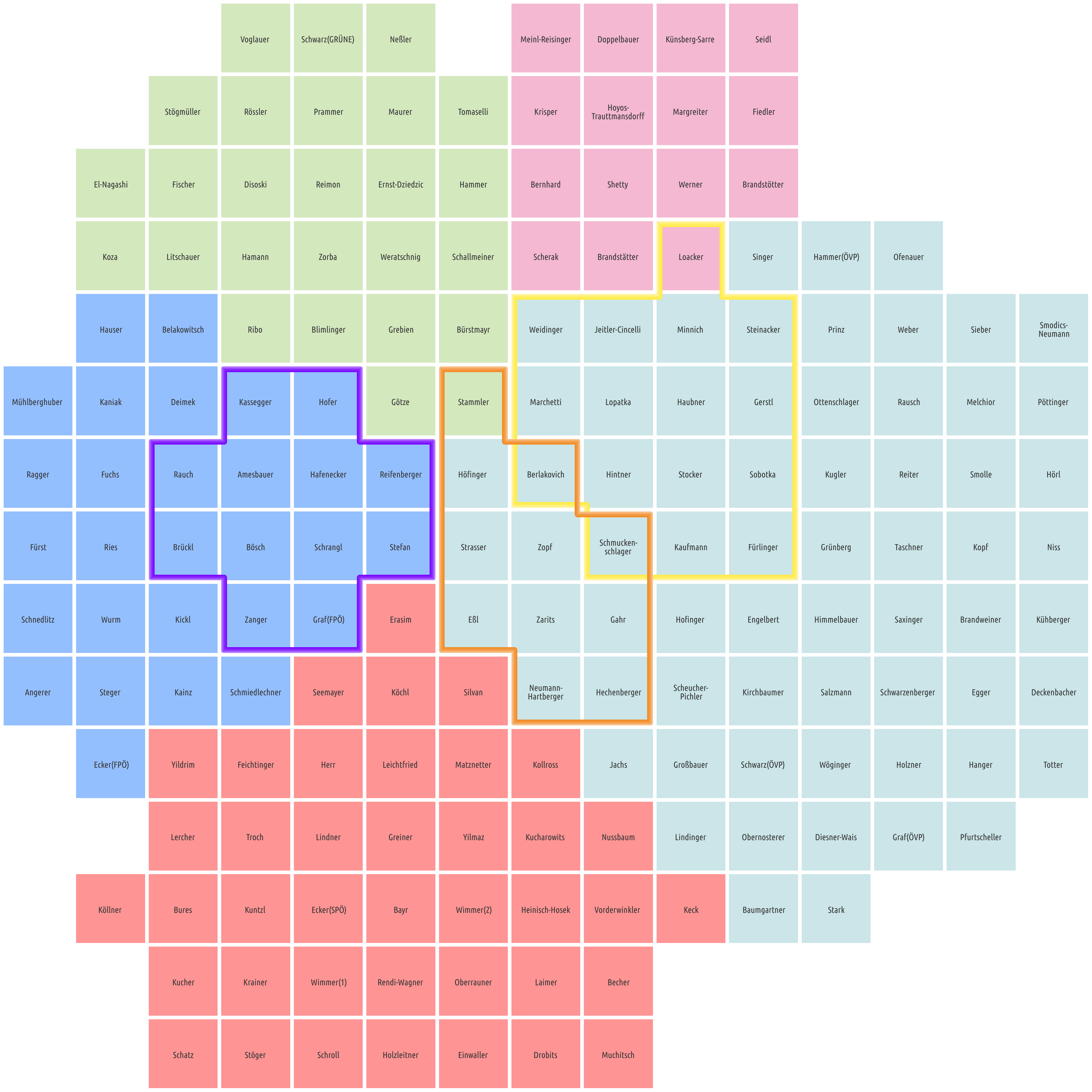}
            \caption{eccentricity-based compactness (\MosaicSetsE{})}
    \end{subfigure}
    \hfill
    \begin{subfigure}{0.45\textwidth}
            \includegraphics[width=\linewidth]{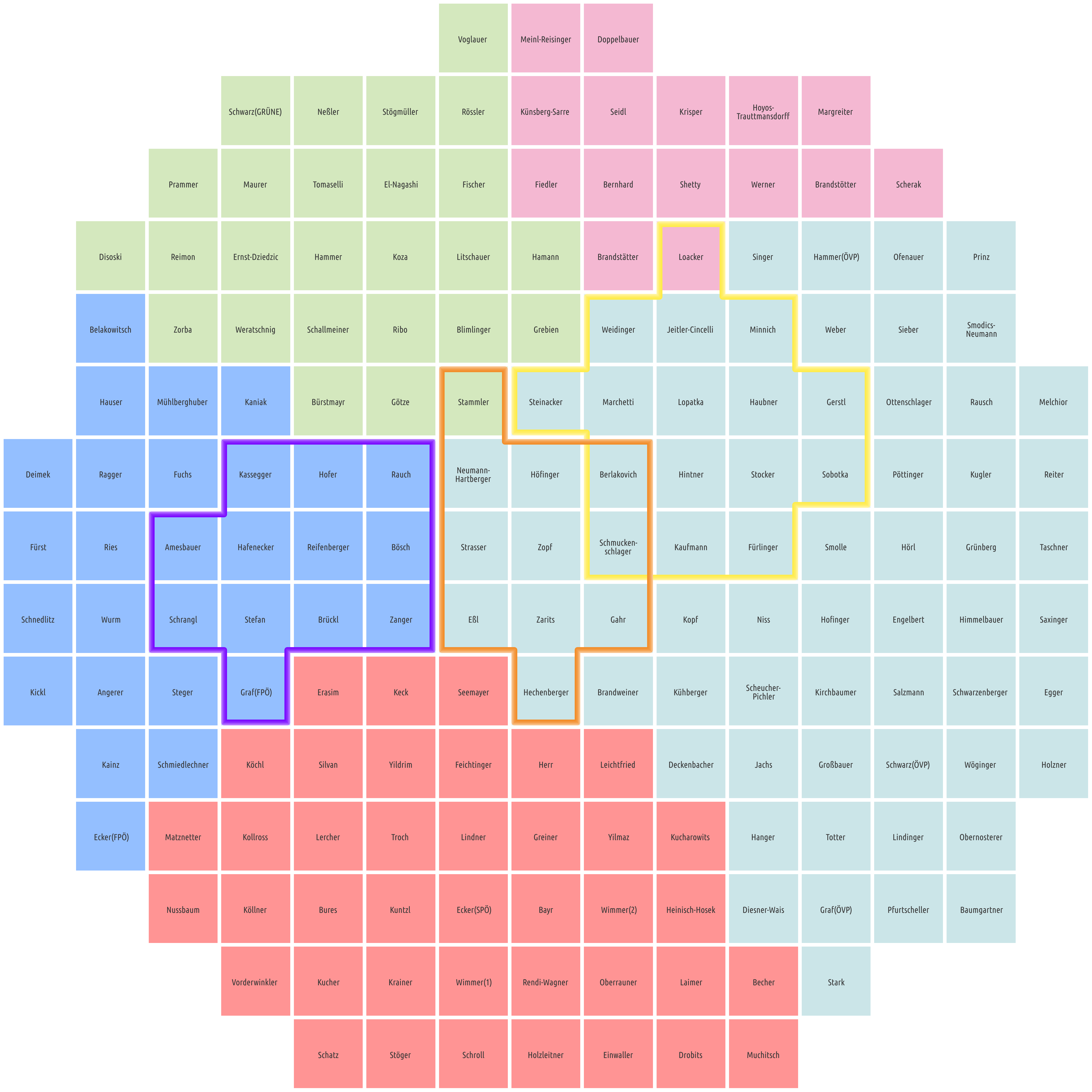}
            \caption{as (a) but area fixed after first iteration (\MosaicSetsEA{})}
    \end{subfigure}
    \caption{\datanationalrat{} with three overlay sets and square grid visualized with MosaicSets. (a) is \MosaicSetsE{} with \change{$\pp{}=0.579$} and running time of \change{$1.6$\,s}. (b)  \MosaicSetsEA{} with \change{$\pp{}=0.512$} and a running time of \change{$1.8$\,s. \datanationalrat{} uses a color scheme that is a lighter variant of the typical political party colors used in Austria.}}
    \label{fig:parliament_sq} 
\end{figure*}

\begin{table*}[!ht]
\resizebox{\textwidth}{!}{%
\begin{tabular}{c|l|c|c|c|c|c|}
%
\textbf{\#} & \multicolumn{1}{c|}{\textbf{Task}}& 
\begin{tabular}{@{}c@{}}MosaicSets \\ Expert 1\end{tabular} &
\begin{tabular}{@{}c@{}}MosaicSets \\ Expert 2\end{tabular} &
\begin{tabular}{@{}c@{}}MosaicSets \\ Expert 3\end{tabular} & \begin{tabular}{@{}c@{}}Euler \\ Diagrams\end{tabular} & \begin{tabular}{@{}c@{}}Frequency \\ Grids\end{tabular} \\ \hline
A1        & Find/Select elements that belong to a specific set                                 & $\bullet$ & $\bullet$ & $\bullet$ & $\bullet$ & $\bullet$               \\
A2        & Find sets containing a specific element.                                           & $\bullet$ & $\bullet$ & $\circ$ & $\bullet$ & $\bullet$                \\
A3        & Find/Select elements based on their set memberships                                & $\bullet$ & $\bullet$ & $\bullet$ & $\bullet$ & $\circ$                     \\
A4        & Find/Select elements in a set with a specific set member-ship degree               & $\circ$ & $\bullet$ & $\bullet$ & - & $\circ$                   \\
A5        & Filter out elements based on their set memberships.                                & - & - & - & - & $\circ$              \\
A6        & Filter  out  elements  based  on  their  set  membership  degrees                  & - & - & - & - & $\circ$                 \\
A7        & Create a new set that contains certain elements.                                   & - & - & - & $\circ$ & $\circ$                 \\
\hline
B1        & Find out the number of sets in the set family.                                      & $\bullet$ & $\bullet$ & $\bullet$ & $\circ$ & $\circ$             \\
B2        & Analyze inclusion relations.                                                      & $\bullet$  & $\bullet$ & $\bullet$ & $\bullet$ & $\bullet$              \\
B3        & Analyze  inclusion  hierarchies                                                    & $\bullet$ & $\bullet$ & $\bullet$ & $\bullet$ & -                \\
B4        & Analyze exclusion relation                                                         & $\bullet$ & $\bullet$ & $\bullet$ & $\bullet$ & -          \\
B5        & Analyze intersection relation                                                      & $\bullet$ & $\bullet$ & $\bullet$ & $\bullet$ & $\bullet$              \\
B6        & Identify intersections between k sets                                              & $\bullet$ & $\bullet$ & $\bullet$ & $\circ$ & $\bullet$          \\
B7        & Identify the sets involved in a certain intersection                               & $\bullet$ & $\bullet$ & $\bullet$& $\bullet$ & $\bullet$           \\
B8        & Identify set intersections belonging to a specific set                             & $\bullet$ & $\bullet$ & $\bullet$ & $\bullet$ & $\bullet$            \\
B9        & Identify the set with the largest / smallest number of pair-wise set intersections & $\bullet$ & $\bullet$ & $\bullet$ & $\circ$ & $\circ$             \\
B10       & Analyze and compare set- and intersection cardinalities                            & $\bullet$ & $\bullet$ & $\bullet$ & $\bullet$ & $\bullet$         \\
B11       & Analyze and compare set similarities                                               & -  & - & - & $\circ$ & -            \\
B12       & Analyze and compare set exclusiveness                                              & $\bullet$ & $\bullet$ & $\bullet$ & $\circ$ & $\bullet$            \\
B13       & Highlight  specific  sets,  subsets,  or  set  relations                           & -  & - & - & n/a & $\bullet$         \\
B14       & Create a new set using set-theoretic operation                                     & -  & - & - & $\circ$ & -             \\
\hline
C1        & Find out the attribute values of a certain element                                 & - & - & - & $\circ$ & \small{\faMousePointer}             \\
C2        & Find out the distribution of an attribute in a certain set or subset               & - & -  & - & $\circ$ & -            \\
C3        & Compare the attribute values between two sets or subsets                           & - & -  & - & $\circ$ & -              \\
C4        & Analyze  the  set  memberships  for  elements  having  certain attribute values    & - & - & - & - & -             \\
C5        & Create  a  new  set  out  of  elements  that  have  certain  attribute values      & -& - & -  & -& -              \\ %
\end{tabular}%
}
\caption{Task taxonomy by Alsallakh et al.~\cite{DBLP:journals/cgf/AlsallakhMAHMR16} and assessment of MosaicSets, Euler diagrams and frequency grids~\cite{micallef2012assessing}. Tasks can be classified as: ($\bullet$) supported; ($\circ$) partially supported; (-) unsupported; or  ({\small\faMousePointer}) requires interactivity. For MosaicSets the classification is based on the opinion of the expert interviews. For Euler diagrams and frequency grids the classification is according to~\cite{DBLP:journals/cgf/AlsallakhMAHMR16}.}
\label{tab:taxonomy}
\end{table*}

%
%
\bibliographystyle{abbrv-doi-hyperref}
%

\bibliography{references}

\AtEndDocument{
\includepdf[pages=-, nup= 2x2, frame=true, delta=3mm 3mm, scale=0.9] {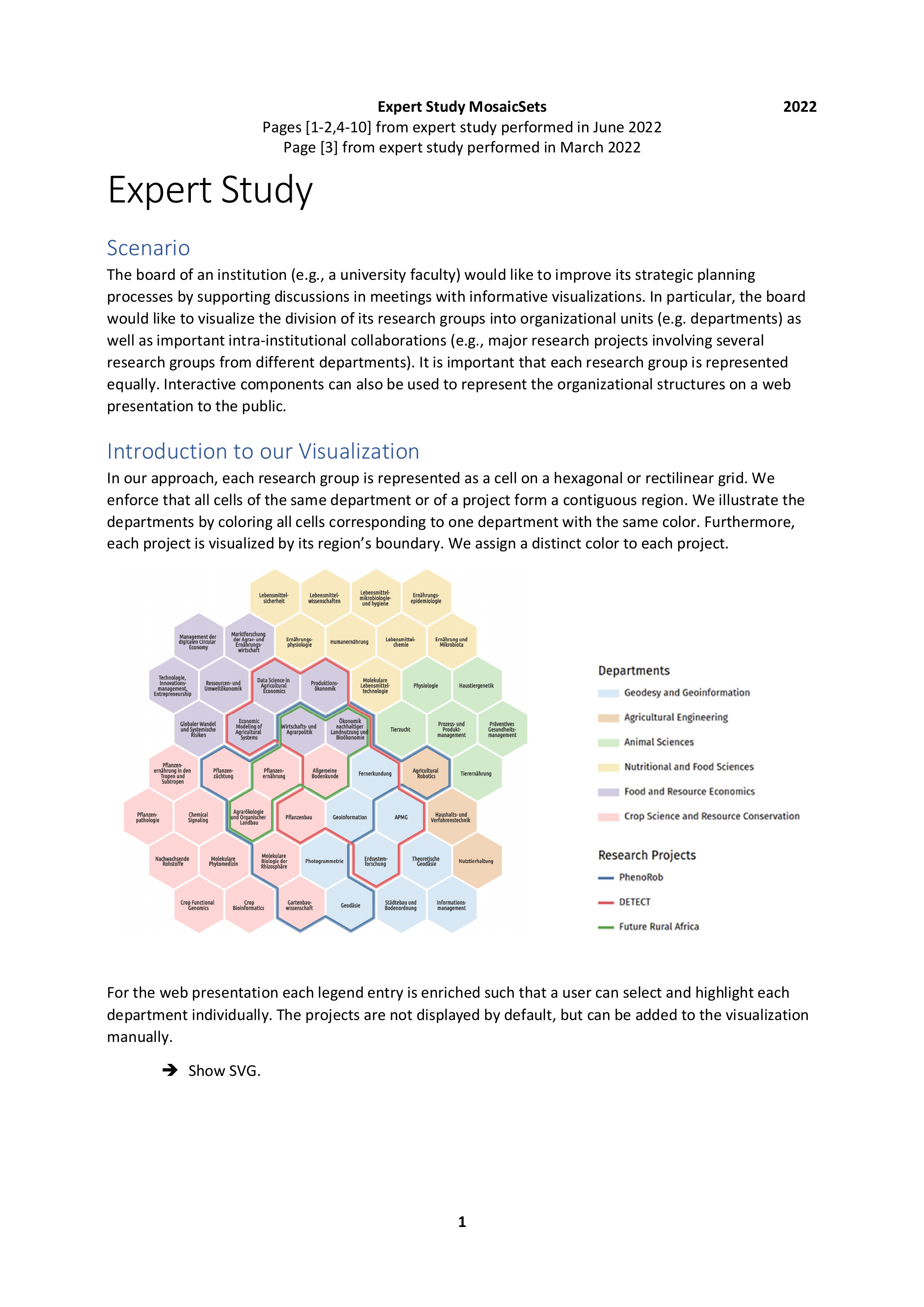}}